\documentclass[12pt]{iopart}
\usepackage{setstack}
\usepackage{amsthm}
\usepackage{amsfonts}
\usepackage{amssymb}
\usepackage{graphicx}
\usepackage{subfig}
\usepackage{bm}

\bibliographystyle{iopart-num}

\newcommand{\ud}{\,\textrm{d}}
\newcommand{\RR}{\mathbb{R}}

\newcommand{\bxi}{\boldsymbol{\xi}}
\newcommand{\bx}{\boldsymbol{x}}
\newcommand{\by}{\boldsymbol{y}}
\newcommand{\balpha}{\boldsymbol{\alpha}}
\newcommand{\bz}{\boldsymbol{z}}
\newcommand{\bv}{\boldsymbol{v}}
\newcommand{\bu}{\boldsymbol{u}}
\newcommand{\be}{\boldsymbol{e}}
\newcommand{\FF}[1]{{}_0\mathcal{F}_1^{(#1)}}
\newcommand{\FZ}[1]{{}_0\mathcal{F}_0^{(#1)}}

\newcommand{\PP}[2]{\mathcal{P}_{#1}^{(#2)}}

\newcommand{\bone}{\boldsymbol{1}}
\newcommand{\bzero}{\boldsymbol{0}}
\newcommand{\bzeta}{\boldsymbol{\zeta}}
\newcommand{\bnabla}{\boldsymbol{\nabla}}

\newcommand{\bmm}{\boldsymbol{m}}

\newcommand{\HH}[1]{\mathcal{H}_{\text{#1}}}
\newcommand{\spn}{\text{Span}}

\newcommand{\bos}{\boldsymbol{s}}
\newcommand{\bor}{\boldsymbol{r}}
\newcommand{\bX}{\boldsymbol{X}}
\newcommand{\bY}{\boldsymbol{Y}}
\newcommand{\blambda}{\boldsymbol{\lambda}}
\newtheorem{theorem}{Theorem}
\newtheorem{lemma}[theorem]{Lemma}
\newtheorem{proposition}[theorem]{Proposition}

\graphicspath{{figures/}}

\begin{document}

\title{Two limiting regimes of interacting Bessel processes}

\author{Sergio Andraus$^1$, Makoto Katori$^2$ and Seiji Miyashita$^1$}
\address{$^1$ Department of Physics, Graduate School of Science, University of Tokyo,
7-3-1 Hongo, Bunkyo-ku, Tokyo 113-0033}
\address{$^2$ Department of Physics, Graduate School of Science and Engineering, Chuo University, 1-13-27 Kasuga, Bunkyo-ku, Tokyo 112-8551}
\ead{andraus@spin.phys.s.u-tokyo.ac.jp}

\begin{abstract}

We consider the interacting Bessel processes, a family of multiple-particle systems in one dimension where particles evolve as individual Bessel processes and repel each other via a log-potential. We consider two limiting regimes for this family on its two main parameters: the inverse temperature $\beta$ and the Bessel index $\nu$. We obtain the time-scaled steady-state distributions of the processes for the cases where $\beta$ or $\nu$ are large but finite. In particular, for large $\beta$ we show that the steady-state distribution of the system corresponds to the eigenvalue distribution of the $\beta$-Laguerre ensembles of random matrices. We also estimate the relaxation time to the steady state in both cases. We find that in the freezing regime $\beta\to\infty$, the scaled final positions of the particles are locked at the square root of the zeroes of the Laguerre polynomial of parameter $\nu-1/2$ for any initial configuration, while in the regime $\nu\to\infty$, we prove that the scaled final positions of the particles converge to a single point. In order to obtain our results, we use the theory of Dunkl operators, in particular the intertwining operator of type $B$. We derive a previously unknown expression for this operator and study its behaviour in both limiting regimes. By using these limiting forms of the intertwining operator, we derive the steady-state distributions, the estimations of the relaxation times and the limiting behaviour of the processes.

\end{abstract}

%\pacs{05.40.-a, 02.30.Ik, 02.30.Gp, 02.50.Ga}
\pacs{05.40.-a, 02.50.-r, 02.30.Gp}

%Keywords: Interacting Bessel processes, steady-state distribution, freezing regime, beta ensembles

\submitto{\JPA}

\maketitle

\section{Introduction}\label{intro}

We consider a family of interacting particle systems in $\RR_+=\{x\in\RR: x\geq 0\}$, where a finite number $N$ of  Brownian motions (BMs) repel each other and their reflection with respect to the origin, while receiving a repulsive force from the origin as well. We denote by  $p(t,\by|\bx)$ the transition probability density (TPD) of the particles in the processes arriving at the positions $\by=(y_1,\ldots,y_N)\in\RR_+^N$ from the initial positions $\bx=(x_1,\ldots,x_N)\in\RR_+^N$ after a time-duration $t>0$. If we denote the Laplacian operator by $\Delta^{(x)}=\sum_{i=1}^N \partial^2/\partial x_i^2$, then the Kolmogorov backward equation (KBE) of these processes is \cite{katoritanemura04}
\begin{eqnarray}
\fl\frac{\partial}{\partial t}p(t,\by|\bx)=&&\frac{1}{2}\Delta^{(x)} p(t,\by|\bx)+\frac{\beta}{2}\Bigg[\sum_{i=1}^N\frac{2\nu+1}{2x_i}\frac{\partial}{\partial x_i}p(t,\by|\bx)\nonumber\\
&&+\!\!\!\!\!\sum_{1\leq i\neq j\leq N}\frac{1}{x_i-x_j}\frac{\partial}{\partial x_i}p(t,\by|\bx)+\!\!\!\!\!\sum_{1\leq i\neq j\leq N}\frac{1}{x_i+x_j}\frac{\partial}{\partial x_i}p(t,\by|\bx)\Bigg].\label{typebradialdunklkbe}
\end{eqnarray}
We call this particle system the {\it interacting Bessel process}. As explained below, $\beta>0$ and $\nu \geq -1/2$, and a two-parameter family of $(\beta,\nu)$-interacting Bessel processes is considered.

The first two terms of \eref{typebradialdunklkbe} describe the diffusive motion of each individual particle. In this model, an individual particle is not a simple one-dimensional BM, but a Bessel process. For $d \in \{2,3,\ldots\} $ the Bessel process is defined as the distance from the origin of a $d$-dimensional BM. By It\^o's formula, the KBE is given by
\begin{equation}
\frac{\partial}{\partial t}p_{\textrm{Bes}(d)}(t,y|x)=\frac{\partial^2}{\partial x^2}p_{\textrm{Bes}(d)}(t,y|x)+\frac{2\nu+1}{2x}p_{\textrm{Bes}(d)}(t,y|x),\label{besselkbe}
\end{equation}
with $\nu=d/2-1\geq -1/2$ \cite{karatzasshereve91}. For $d=1$, the Bessel process is defined as the absolute value of a BM and a reflection condition at the origin is assumed to solve \eref{besselkbe}. If we denote by $I_\nu(x)$ the modified Bessel function of the first kind, $p_{\textrm{Bes}(d)}(t,y|x)$ is given by
\begin{equation}
p_{\textrm{Bes}(d)}(t,y|x)=\frac{y^{\nu+1}}{x^\nu}\frac{1}{t}\rme^{-(x^2+y^2)/2t}I_\nu \Big(\frac{xy}{t}\Big).\label{besseltpd}
\end{equation}
This expression can be extended to all real values $\nu>-1/2$, and then the Bessel process with a continuous parameter $\nu$ is defined \cite{karatzasshereve91}. (Since the modified Bessel function plays an important role in its definition, this process is called the Bessel process.) The long-term behaviour of the Bessel process depends on $\nu$. If $-1/2\leq\nu<0$ the Bessel process is recurrent, while when $\nu> 0$ the process is transient. 

The interaction between particles represented by the third and fourth terms on the rhs of \eref{typebradialdunklkbe} is repulsive among the $N$ Bessel processes and their reflected positions with respect to the origin. The parameter $\beta/2$ is put on all of the drift terms (second, third and fourth terms) following the convention from random matrix theory \cite{mehta04, forrester10}. The parameter $\beta$ gives the inverse temperature if the system has a thermal equilibrium state, which will be related with the $\beta$-Laguerre ensembles of random matrices \cite{dumitriuedelman02, dumitriuedelman05}. The case $\beta=2$ of these processes has been studied as the eigenvalue process of matrix-valued BMs with chiral symmetries \cite{bru91}, as an application of the multidimensional Yamada-Watanabe theorem \cite{graczykmalecki13} and as the noncolliding Bessel process \cite{katoritanemura04, konigoconnell01}; additionally, the determinantal structures of spatio-temporal correlation functions were clarified in \cite{katoritanemura11}. An entrance law from the $\beta=1$ Laguerre ensemble (the chiral GOE eigenvalue ensemble) to the $(2,\nu)$-interacting Bessel process was studied in \cite{katori12}, in which the spatio-temporal correlation functions are described by Pfaffians \cite{katoritanemuraPTRF07}. These determinantal and Pfaffian structures of correlation are, however, not expected for $\beta\neq 2$.

The interacting Bessel processes are realized for all values $\beta>0,\ \nu\geq-1/2$ by the radial Dunkl processes of type $B$ \cite{demni08A}. A useful feature of Dunkl operators \cite{dunkl89} and Dunkl processes \cite{roslervoit98} is the existence of the intertwining operator of type $B$, $V_B$ \cite{dunkl91}, which transforms the KBE of a set of independent Brownian motions with a symmetric initial condition into \eref{typebradialdunklkbe}  (see \ref{generalreview}). Therefore, we can use $V_B$ to study the main properties of the interacting Bessel processes. In particular, $p(t,\by|\bx)$ is proportional to a multivariate special function known as the generalized Bessel function of type $B$ \cite{bakerforrester97, roslervoit08}. It is obtained by applying $V_B$ on the exponential function symmetrized with respect to the root system of type $B$ (see \ref{generalreview} and \eref{genbesseltypeb} for the exact definition). In view of \eref{besseltpd}, it is fitting that we named the processes defined by \eref{typebradialdunklkbe} as interacting Bessel processes, because each individual particle is a {\it Bessel process}, and the TPD of $N$ such interacting particles can be described by using the {\it generalized Bessel function}.

In the present paper, we study the behaviour of the interacting Bessel processes in the following two limiting regimes:
\begin{eqnarray}
&&\beta\to\infty\mbox{ with $\nu$ fixed,}\label{limitbeta}\\
&&\nu\to\infty\mbox{ with $\beta$ fixed.}\label{limitgamma}
\end{eqnarray}
We call \eref{limitbeta} the freezing regime. First, we obtain the time-scaled steady-state distribution of the processes for large but finite values of $\beta$ and $\nu$. We find that these steady-state distributions are independent of the initial distribution of the system, and we obtain an estimation of the time required to reach the steady state in both cases. In particular, the steady-state distribution when $\beta$ is large corresponds (after a variable substitution) to the eigenvalue distribution of the $\beta$-Laguerre ensembles of random matrices \cite{dumitriuedelman05}. Then, we calculate the scaled particle distribution exactly after taking both limits for an arbitrary initial distribution. Using numerical simulations, we illustrate our results by plotting the particle density of the interacting Bessel processes as $\beta$ and $\nu$ approach either limit. For the regime \eref{limitbeta}, Figure~\ref{figRDunklB}(a) (Section~\ref{TypeB}) shows that for $\beta=2$ and $\nu=1/2$ the particle density of the process relaxes to the exact distribution after a sufficiently long time. Figure~\ref{figRDunklB}(b) shows how the steady-state distribution changes as $\beta$ grows and how the distribution takes the form of a sum of delta functions centred at the square root of the zeroes of the associated Laguerre polynomials. We prove these facts in Theorem~\ref{freezingtypeb}. Similarly, in Figure~\ref{figRDunklBoneta}(a) (Section~\ref{TypeB}) we observe how the particle density relaxes to the steady-state distribution for $\nu=16$ and $\beta=2$, and we observe that our numerical and analytical results are consistent with the known exact density. In Figure~\ref{figRDunklBoneta}(b), we depict the steady-state density of the processes as $\nu$ approaches the limit \eref{limitgamma}, and we observe that a single peak forms, indicating that all particles tend to freeze around a single point. We prove these observations in Theorem~\ref{freezingoneta}. In order to derive the results of Theorems~\ref{freezingtypeb} and \ref{freezingoneta}, we give an expression for $V_B$ for the case in which it is applied on symmetric polynomials of the variables $\{x_i^2\}_{i=1}^N$, which is derived from the generalized Bessel function of type $B$. We also study the behaviour of $V_B$ in the regimes \eref{limitbeta} and \eref{limitgamma}. We find that in these limiting regimes, $V_B$ takes the form of the intertwining operator of type $A$, $V_A$, studied in our previous paper \cite{andrauskatorimiyashita12}. In addition, these asymptotic forms of $V_B$ allow us to estimate the relaxation time to the steady state distributions given in Theorems~\ref{freezingtypeb} and \ref{freezingoneta}.

This paper is organized as follows: in Section~\ref{TypeB}, we present the statements of Theorems~\ref{freezingtypeb} and \ref{freezingoneta}, and we illustrate them through numerical simulations. In Section~\ref{sectionvb}, we present the explicit form of $V_B$ for symmetric polynomials of squared variables as Proposition~\ref{vktypeb}. We use this expression for $V_B$ in Section~\ref{proofs}, where we give the proofs of Theorems~\ref{freezingtypeb} and \ref{freezingoneta}. As part of these proofs, we study the behaviour of $V_B$ and the generalized Bessel function of type $B$ in the regimes \eref{limitbeta} and \eref{limitgamma}. We consider some open problems and other concluding remarks in Section~\ref{conclusions}.

\section{Main results}\label{TypeB}

It was noted in \cite{andrauskatorimiyashita12} that Dyson's model with $\beta>0$ (an interacting Brownian motion) has a freezing limit that depends on the roots of the Hermite polynomials. Let us define the $N$-dimensional delta function as $\delta^{(N)}(\bx)=\prod_{i=1}^N \delta(x_i)$ and the TPD of the interacting Brownian motion as $p_A(t,\by|\bx)$. Suppose that the interacting Brownian motion is restricted to start from a normalized initial distribution $\mu(\bx)$ defined on the Weyl chamber of type $A$, $C_A=\{\bx\in\RR^N:x_1<x_2<\cdots<x_N\}$, and that the corresponding distribution at time $t$ is given by
\begin{equation}\label{TypeADistributionGeneral}
f_A(t,\by)\ud\by=\int_{C_A}p_A(t,\by|\bx)\mu(\bx)\ud\bx\ud\by.
\end{equation}
Let us use the notation $x^2=\bx\cdot\bx=|\bx|^2$. Then, for $\beta\gg 1$ one can write
\begin{equation}
\fl f_A(t,\sqrt{\beta t}\bu)(\beta t)^{N/2}\ud\bu\propto \rme^{-\beta u^2/2}\prod_{1\leq i<j\leq N}|u_j-u_i|^\beta \beta^{N/2} \ud\bu[1+O(t^{-\eta/4})]
\end{equation}
for $t\gg 1/\beta^2$ and any initial distribution with power-law behaviour at infinity, $\mu(\bx)\sim x^{-N-\eta}$, $\eta>0$. This equation is obtained from \eref{TypeADistributionGeneral} by substituting $\by=\sqrt{\beta t}\bu$ and by approximating the form of the integral over $\bx$ for large values of $t$. (We refer to the proof of Theorem~\ref{freezingtypeb}, Section~\ref{proofs} for details on how this expression is obtained, in particular Equations~\eref{almosttheorem1} to \eref{DynamicalExpectationConverging}.) After the substitution $u_i=\lambda_i/\sqrt \beta$, this expression corresponds to the eigenvalue distribution of the $\beta$-Hermite ensembles of random matrices (Equation~(1) of \cite{dumitriuedelman02}). Additionally, if we denote the symmetric group acting on the components of $N$-dimensional vectors by $S_N$, the particle distribution is given by the following sum of delta functions in the freezing regime,
\begin{equation}\label{freezing}
\lim_{\beta\to\infty}f_A(t,\sqrt \beta \bv)\beta^{N/2}=\sum_{\rho\in S_N} \delta^{(N)}(\bv-\sqrt{t}\rho\bz_N).\label{frozenradiala}
\end{equation}
This holds for an arbitrary initial configuration and any time-duration $t>0$. Here, $\bz_N$ is the vector whose components are the roots of the $N$th Hermite polynomial $H_N(x)$ (in increasing order). This is a stronger statement than the one presented in Theorem~4 of \cite{andrauskatorimiyashita12}, but it is readily proved using the same arguments used in the proofs of Theorems~\ref{freezingtypeb} and \ref{freezingoneta} below (see Section~\ref{proofs}). 

In Figure~\ref{FigRDunklA}, we depict the result of numerical simulations of the interacting Brownian motions for several values of $\beta$. The stochastic differential equations (SDEs) for Dyson's model with $\beta>0$ denoted by $\bX(t)=(X_1(t),X_2(t),\ldots,X_N(t)),\ t\geq0$ are given by \cite{katoritanemura04}
\begin{equation}
\ud X_i(t)=\ud B_i(t) + \frac{\beta}{2}\sum_{\substack{j:j\neq i \cr j=1}}^N\frac{\ud t}{X_i(t)-X_j(t)},\quad i=1,2,\ldots,N.\label{dysonSDEs}
\end{equation}
Here, the $\{B_i(t)\}_{i=1}^N$ are independent Brownian motions. We have integrated the SDEs \eref{dysonSDEs} numerically for seven particles and various values of $\beta$ up to time $t=1$, and we have repeated the process $10^6$ times with a step size of $5\times10^{-5}$, with an initial configuration in which the Brownian motions start from the positions $0,\pm10^{-2},\pm2\times10^{-2}$ and $\pm3\times10^{-2}$. The graph depicts the particle density with a resolution of $10^{-2}$ after scaling down the final positions of the process by a factor of $\sqrt{\beta}$. The vertical lines in the figure denote the exact value of the roots of the Hermite polynomials. As the value of $\beta$ grows, the particles' final positions approach the roots of $H_N(x)$, so the graph suggests that as $\beta\to\infty$, the probability peaks in the figure become delta functions, as expected from \eref{frozenradiala}. It is known \cite{katoritanemura04} that when the initial position of all particles is the origin, the particle density of the interacting Brownian motions at a fixed time must coincide with the density of the Gaussian random matrix ensembles after a suitable scaling. Therefore, the densities depicted in Figure~\ref{FigRDunklA} must coincide with the eigenvalue density of the $\beta$-Hermite ensembles considered by Dumitriu and Edelman \cite{dumitriuedelman02}. Indeed, our results are consistent with their low-temperature approximations of the $\beta$-Hermite ensembles of random matrices \cite{dumitriuedelman05}, where the probability peaks are given by the zeroes of the Hermite polynomials scaled down by a factor $\sqrt{2N}$. 

\begin{figure}[!h]
                \centering
                \includegraphics[width=0.7\textwidth]{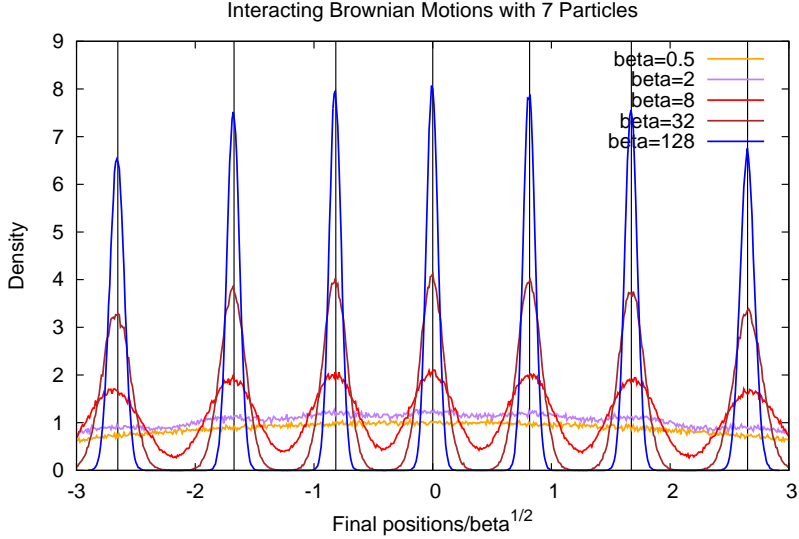}
                \caption{(Colour online) Particle density of the interacting Brownian motions at $t=1$ for several values of $\beta$. These curves correspond to the time-scaled steady-state distributions, because simulations for larger values of $t$ yield identical curves as those shown here after scaling down the final positions by a factor of $\sqrt t$.} \label{FigRDunklA}
\end{figure}

\subsection{Freezing regime of interacting Bessel processes}

A similar situation arises in the limit \eref{limitbeta} of the interacting Bessel processes. In this case, the freezing regime involves the zeroes of the the $N$th associated Laguerre polynomial $L_N^{(\alpha)}(x)$. This polynomial can be defined by the following Rodrigues' formula \cite{szego}
\begin{equation}
\rme^{-x}x^{\alpha}L_N^{(\alpha)}(x)=\frac{1}{N!}\Big(\frac{\ud}{\ud x}\Big)^N(\rme^{-x}x^{N+\alpha}).
\end{equation}
If we denote the vector of zeroes of $L_N^{(\alpha)}(x)$ in ascending order by $\bos_{\alpha}=(s_{1,\alpha},\ldots,s_{N,\alpha})$, then the vector \smash{$\bz_N=(\sqrt{s_{1,\alpha}},\ldots,\sqrt{s_{N,\alpha}})$} with $\alpha=\nu-1/2$, its permutations and sign changes make up the set of points where $f(t,\by)$ converges in the freezing regime and where the steady-state distributions attain their maxima. Consider a normalized initial distribution $\mu(\bx)$ defined in the Weyl chamber of type $B$, $C_B=\{\bx\in\RR^N:0<x_1<x_2<\cdots<x_n\}$, and the associated particle distribution,
\begin{equation}
f(t,\by)\ud\by=\int_{C_B}p(t,\by|\bx)\mu(\bx)\ud\bx\ud\by.
\end{equation}
Let us define the function
\begin{eqnarray}
\fl F(\bz,\nu,N)&=&z^2-(\nu+1/2)\sum_{i=1}^N\log z_i^2-2\sum_{1\leq i<j \leq N}\log |z_j^2-z_i^2|-N(N+\nu-1/2)\nonumber\\
\fl&&+\sum_{i=1}^N i\log i +\sum_{i=1}^N(\nu+i-1/2) \log (\nu+i-1/2),\label{functionFforb}
\end{eqnarray}
and let us denote the group composed by all the permutations and sign changes of the components of $N$-dimensional vectors by $W_B$ (see \ref{generalreview}, \eref{RootSystemTypeB} and \eref{ReflectionsTypeB} for details on the definition of $W_B$.)

\begin{theorem}\label{freezingtypeb}
Assume that $\mu(\bx)$ is a Riemann-integrable distribution with power-law decay $\mu(\bx)\sim x^{-N-\eta}$ at infinity, with $\eta>0$. Let $C$ and $C^\prime$ be large positive constants, and set $\alpha=\nu-\frac{1}{2}$. Then, for finite $\beta> C/(N+\alpha)$, the distribution $f(t,\by)$ is given by
\begin{equation}
f(t,\sqrt{\beta t}\bu)(\beta t)^{N/2}\ud\bu=c\,\rme^{-\beta F(\bu,\alpha+1/2,N)/2}\beta^{N/2}\ud\bu(1+O(t^{-\eta/4}))\label{approxbeta}
\end{equation}
with a normalization constant $c$, provided $t> C^\prime/[\beta^2 N^2(N+\alpha)^2]$. This distribution has its maxima at $\bz_N=(\sqrt{s_{1,\alpha}},\ldots,\sqrt{s_{N,\alpha}})$.
Furthermore, in the freezing regime \eref{limitbeta}, for any $\mu(\bx)$ and $t>0$ we have
\begin{equation}
\lim_{\beta\to\infty}f(t,\sqrt{\beta}\bv)\beta^{N/2}\ud\bv=\sum_{\rho\in W_B}\delta^{(N)}(\bv-\sqrt{t}\rho \bz_N)\ud\bv.\label{freezinglimittypeb}
\end{equation}

\end{theorem}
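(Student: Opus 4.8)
The plan is to exploit the representation of the transition probability density through the intertwining operator $V_B$, in direct parallel with the type-$A$ argument sketched around \eref{frozenradiala}, and to reduce everything to the explicit action of $V_B$ on symmetric polynomials of the $\{x_i^2\}$ given in Proposition~\ref{vktypeb}. First I would write $f(t,\by)=\int_{C_B}p(t,\by|\bx)\mu(\bx)\ud\bx$ and substitute $\by=\sqrt{\beta t}\bu$, so that the Gaussian factor $\rme^{-y^2/2t}$ of the type-$B$ kernel becomes $\rme^{-\beta u^2/2}$, the Jacobian produces $(\beta t)^{N/2}$, and the symmetric weight factors in $\by$ collapse into $\prod_i(u_i^2)^{\beta(\nu+1/2)/2}\prod_{i<j}|u_j^2-u_i^2|^\beta$. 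Comparing with the definition \eref{functionFforb}, these factors together are exactly $\rme^{-\beta F(\bu,\alpha+1/2,N)/2}$ up to a $\bu$-independent constant, with the additive terms in \eref{functionFforb} (the $-N(N+\nu-1/2)$ and the two Stirling-type sums) being precisely those absorbed into the normalization $c$ and produced by the large-parameter form of $V_B$.

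The core estimate is the large-$t$ control of the residual integral over $\bx$. After the scaling, the generalized Bessel function of type $B$ carries argument $\sqrt{\beta/t}\,\bx\cdot\bu$, which is small for large $t$; using the polynomial expansion furnished by Proposition~\ref{vktypeb} together with the power-law decay $\mu(\bx)\sim x^{-N-\eta}$, I would show that the integral of $\rme^{-x^2/2t}$ times the type-$B$ Bessel function against $\mu(\bx)$ over $C_B$ converges to a $\bu$-independent constant, with the subleading terms controlled by $O(t^{-\eta/4})$. This is the step to be made explicit in Equations~\eref{almosttheorem1}--\eref{DynamicalExpectationConverging}. The decay exponent $\eta>0$ is what simultaneously guarantees integrability at infinity and the stated error rate, and the lower thresholds on $t$ and $\beta$ in the statement are exactly those beyond which the small-argument expansion dominates. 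This establishes \eref{approxbeta}.

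To locate the maxima I would minimize $F$. Setting $\partial F/\partial z_k=0$ and writing $w_k=z_k^2$ gives, after dividing by $2z_k$, the system $1-(\nu+1/2)/w_k-2\sum_{j\neq k}1/(w_k-w_j)=0$, i.e. $\sum_{j\neq k}1/(w_k-w_j)=1/2-(\alpha+1)/(2w_k)$ with $\alpha=\nu-1/2$. This is precisely the Stieltjes electrostatic equilibrium condition characterizing the zeroes $\bos_\alpha$ of $L_N^{(\alpha)}$, so the unique critical configuration in the open chamber is $w_k=s_{k,\alpha}$, i.e. $\bu=\bz_N$; that it is a genuine minimum follows from the strict convexity of the energy, equivalently from the known uniqueness of the Laguerre equilibrium.

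Finally, for the freezing limit \eref{freezinglimittypeb} I would not use the large-$t$ approximation but instead the $\beta\to\infty$ form of $V_B$ directly, passing to the scaling $\by=\sqrt\beta\bv$. As $\beta\to\infty$ the type-$B$ generalized Bessel function degenerates so that $V_B$ acts as a $W_B$-symmetrized evaluation at the equilibrium configuration; the resulting measure concentrates on the orbit $\{\sqrt t\,\rho\bz_N:\rho\in W_B\}$ (the minimum of $F(\bv/\sqrt t,\alpha+1/2,N)$ sitting at $\bv=\sqrt t\,\bz_N$), and a Gaussian-well computation — the wells having width $\beta^{-1/2}$, with the prefactor $\beta^{N/2}$ and the Hessian determinant of $F$ supplying the constant — shows each concentration point carries a unit delta mass, independently of $\mu$ and for every $t>0$, yielding $\sum_{\rho\in W_B}\delta^{(N)}(\bv-\sqrt t\,\rho\bz_N)$. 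The main obstacle is the uniform asymptotic analysis of the type-$B$ generalized Bessel function: controlling the residual $\bx$-integral to the stated $O(t^{-\eta/4})$ accuracy, and, in the freezing step, verifying that the limiting form of $V_B$ degenerates exactly to the $W_B$-symmetrized point mass so that the Hessian prefactors reproduce the correct delta normalization.
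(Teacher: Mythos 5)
Your overall architecture is essentially the paper's: substitute $\by=\sqrt{\beta t}\bu$, identify the weight and Gaussian factors with $\rme^{-\beta F/2}$, show the residual $\bx$-integral tends to a $\bu$-independent constant with rate $O(t^{-\eta/4})$, characterize the minimizers of $F$ as the square roots of the Laguerre zeroes (your Stieltjes electrostatic identification is equivalent to the paper's ODE argument in \ref{laguerreappendix}), and obtain \eref{freezinglimittypeb} by concentration of $\rme^{-\beta F/2}\beta^{N/2}$ at the orbit $W_B\bz_N$. The one substantive difference is how you pin the delta masses: you propose a Laplace/Hessian computation, whereas the paper avoids computing the Hessian determinant by noting that $\mathcal{I}(\bu)$ is the same at every orbit point and that $f(t,\sqrt{\beta t}\bu)(\beta t)^{N/2}$ is normalized to $|W_B|=2^NN!$; the paper also observes the pleasant identity $z_N^2=\sum_i s_{i,\alpha}=N(N+\alpha)$, which makes the residual integral exactly $\int\mu=1$ at the concentration points. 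Either route works, but the normalization argument is lighter.

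The genuine gap is in your treatment of the residual $\bx$-integral. You propose to control it using only the polynomial expansion of Proposition~\ref{vktypeb} plus the tail decay of $\mu$, but that expansion gives no useful \emph{upper} bound on the generalized Bessel function in the region $x\gtrsim\epsilon\sqrt t$ where its argument is not small, and the tail of $\mu$ alone cannot beat a factor that is a priori unbounded in $\bx$. The paper closes this by invoking R\"osler's positivity theorem, which represents $V_B f(\bx)$ as integration against a probability measure supported in the convex hull of $W_B\bx$ and yields the bound $|V_B\rme^{\sqrt\beta\bx\cdot\by}|\leq\rme^{\sqrt\beta xy}$ (Equation~\eref{DunklKernelBound}); combined with ``borrowing'' the factor $\rme^{-\beta u^2/2}$ from $\rme^{-\beta F/2}$ this completes the square and makes the tail contribution $O(t^{-\eta/2}\epsilon^{-\eta})$, after which $\epsilon\propto t^{-1/4}$ gives the stated rate. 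Relatedly, your plan does not explain where the threshold $\beta>C/(N+\alpha)$ comes from: the paper obtains it by applying the radial type-$B$ generator to the freezing-limit form of the Bessel function and demanding that the relative error in the eigenvalue relation be small, which is what licenses replacing $E_\beta(\bzeta,\bu)$ by $\exp[\zeta^2u^2/2N(N+\alpha)]$ in the regime $\zeta u<\sqrt\beta N(N+\alpha)/4$. Without these two ingredients the first statement of the theorem, with its explicit thresholds on $\beta$ and $t$, does not follow from your sketch.
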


\begin{figure}[!t]
                \centering
                \subfloat[3 particles, $(\beta,\nu)=(2,1/2)$.]
                {\includegraphics[width=0.45\textwidth]{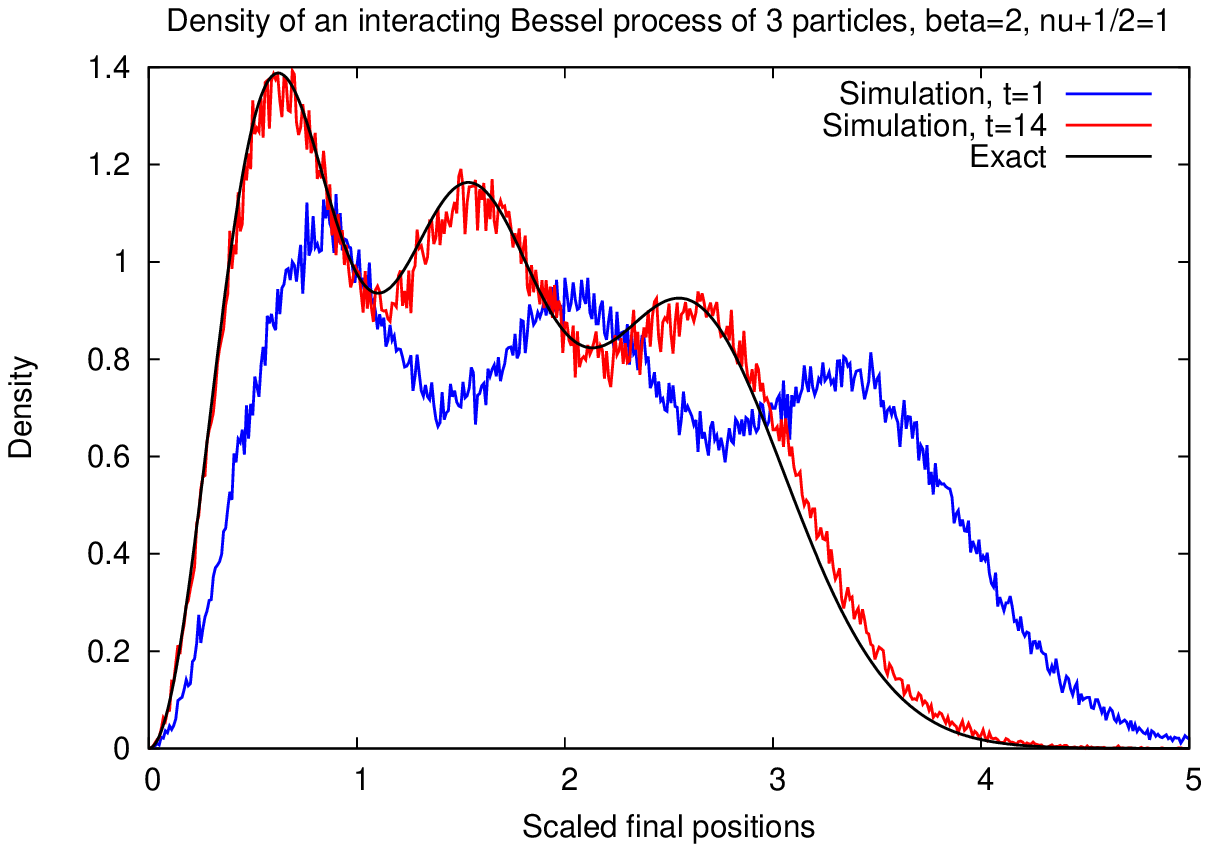}}\
                \subfloat[7 particles, $\nu=1/2$.]
                {\includegraphics[width=0.45\textwidth]{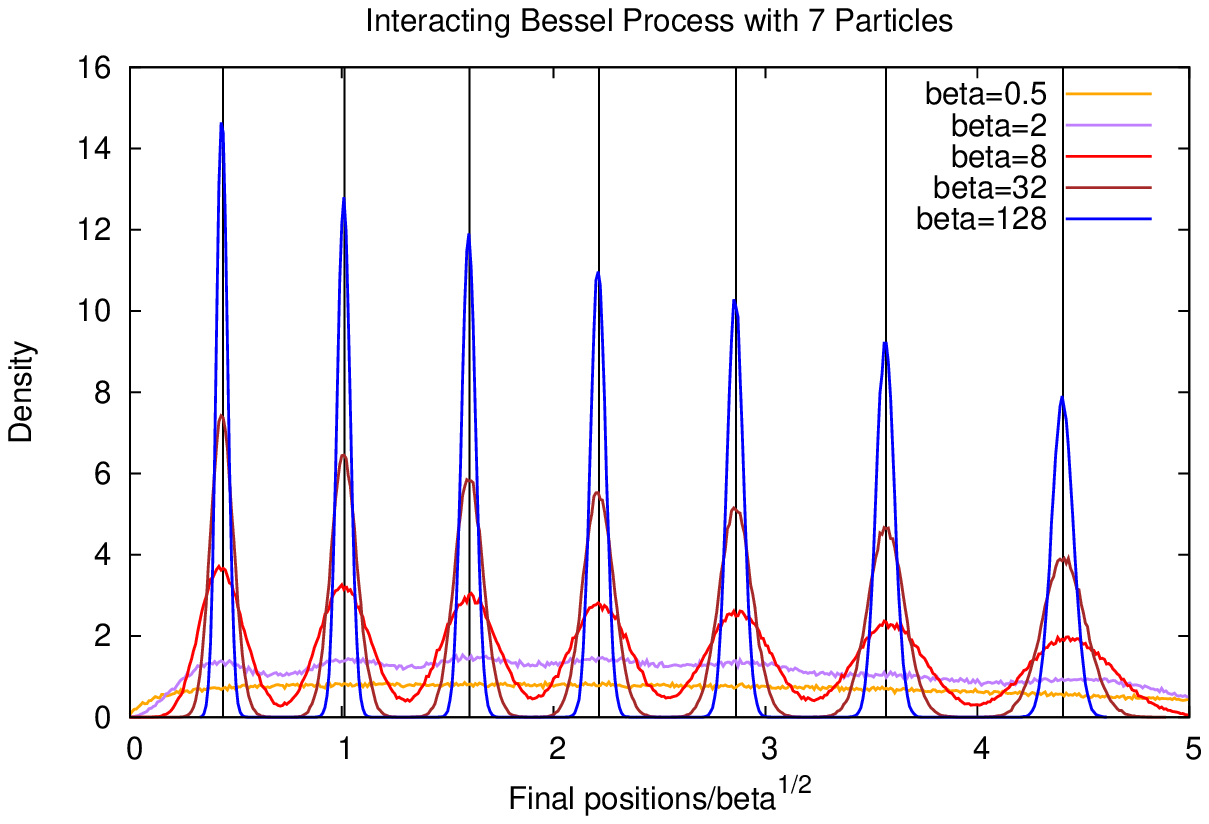}}
                \caption{(Colour online) (a) Exact and simulated densities with the initial configuration $x_i=i$, with the final positions scaled down by a factor of $\sqrt{\beta t}$. (b) Particle density of the interacting Bessel processes at $t=1$ for several values of $\beta$, and $\nu=1/2$ with initial configuration $x_i=i\times10^{-2}$.}
                \label{figRDunklB}
\end{figure}

Let us denote the positions of the particles in an interacting Bessel process by $\bY(t)=(Y_1(t),Y_2(t),\ldots,Y_N(t))$. They obey the following SDEs \cite{katoritanemura04}, 
\begin{equation}
\fl\ud Y_i(t)= \ud B_i(t) + \frac{\beta}{2}\Bigg[\frac{2\nu+1}{2Y_i(t)}+\sum_{\substack{j:j\neq i\cr j=1}}^N\Bigg\{\frac{1}{Y_i(t)-Y_j(t)}+\frac{1}{Y_i(t)+Y_j(t)}\Bigg\}\Bigg]\ud t,\label{besselSDEs}
\end{equation}
for $i=1,2,\ldots,N$. We have integrated \eref{besselSDEs} numerically to illustrate the behaviour of the processes as the value of $\beta$ increases. In Figure~\ref{figRDunklB}(a), we depict a process of three particles with the initial configuration $x_i=i$ for $\beta=2$ and $\nu=1/2$, with time increments of $2\times10^{-4}$ and $10^6$ iterations. We compare our numerical results with the exact density 
\begin{eqnarray}
K_{N,t}^{(\nu)}(y,y)dy&=&\frac{N!}{\Gamma(\nu+N)}\Bigg\{N\Bigg[L_N^{(\nu)}\Bigg(\frac{y^2}{2t}\Bigg)\Bigg]^2+L_N^{(\nu)}\Bigg(\frac{y^2}{2t}\Bigg)L_{N-1}^{(\nu)}\Bigg(\frac{y^2}{2t}\Bigg)\nonumber\\
&-&(N+1)L_{N+1}^{(\nu)}\Bigg(\frac{y^2}{2t}\Bigg)L_{N-1}^{(\nu)}\Bigg(\frac{y^2}{2t}\Bigg)\Bigg\} \frac{2}{y}\Bigg(\frac{y^2}{2t}\Bigg)^{\nu} \rme^{-y^2/2t}dy\label{exactdensity}.
\end{eqnarray}
This equation is obtained from Equation~(31) of \cite{katoritanemura04}, which states that when a non-colliding ($\beta=2$) Bessel process starts from $\bx=\bzero=(0,0\ldots 0)$, its distribution is given by the eigenvalue density of the complex Wishart ensemble of random matrices with its eigenvalues $\lambda_i$ replaced by $y^2_i/\sqrt{2t}$. Then, we obtain the one-point correlation function~\eref{exactdensity} by using Equation~(5.13) of \cite{forrester10} for Laguerre polynomials, which is the result of integrating out $N-1$ of the variables $\{y_i\}_{1\leq i\leq N}$. We observe that at $t=1$ the scaled distribution has the same overall shape of the steady-state distribution, but it is clearly different because the effect of the initial configuration is still present. For this case, the initial distribution has no tail, and so the estimate for the relaxation time given in Theorem~\ref{freezingtypeb} can be improved upon. Equation~\eref{RelaxationImprovementBeta} gives the relaxation time estimation for initial distributions with compact support; this estimation assumes that after scaling the initial distribution as $\mu(\sqrt{t}\bzeta)$, its support is completely contained within a ball of radius $\epsilon$. We denote by $r_\mu$ the point of the support of $\mu(\bx)$ which is farthest from the origin. In order to have $\epsilon<1$ with $r_\mu^2=1^2+2^2+3^2=14$, which is the case of the figure, a time larger than $t=14$ is required. In the figure, the scaled density already shows a close agreement with the exact distribution for $t=14$ and $\beta=2$. This shows that the relaxation times given here might be somewhat strict because our derivations are based on the worst-case assumption that $\mu(\bx)$ may have a long tail. Additionally, the difference between the approximated density derived from \eref{approxbeta} (not shown) and the exact density \eref{exactdensity} is never larger than $10^{-3}$, and therefore the approximated density is essentially the same curve as the exact density.

After confirming that the steady-state distribution is correct, we focus on the freezing regime. In Figure~\ref{figRDunklB}(b), we plot the particle density for 7 particles and several values of $\beta$ with $\nu=1/2$ at $t=1$ with time steps equal to $2\times10^{-4}$ and $10^6$ iterations, and the initial configuration $x_i=i\times10^{-2}$. The final positions are scaled down by $\sqrt{\beta t}$. The vertical lines denote the exact values of the square roots of the Laguerre zeroes. It is clear that as $\beta$ grows, the probability peaks become narrower and that they are centred around the vertical lines. In this case, for the relaxation time estimation $\epsilon^2>r_\mu^2/t$, we have $r_\mu^2=0.014$, and to be able to choose $\epsilon<1$, one only needs to make $t>0.014.$ Then, $t=1$ is much larger than this relaxation time estimate, so the densities in the graph for $\beta\geq 2$ correspond to the steady-state distributions. As depicted in the figure, the steady-state particle densities become a sum of delta peaks as $\beta$ tends to infinity. Note that the densities in the figure have maxima that occur in fixed points as $\beta$ varies.

The variable substitution $\lambda_i=\beta u_i^2$ maps the steady-state distribution \eref{approxbeta} into the eigenvalue distribution of the $\beta$-Laguerre ensembles of random matrices \cite{dumitriuedelman02}. The substitution yields
\begin{equation}
\fl f(t,\sqrt{\beta t}\bu)(\beta t)^{N/2}\ud\bu|_{\lambda_i=\beta u_i^2}\propto \rme^{-\sum_{i=1}^N \lambda_i/2}\prod_{i=1}^N\lambda_i^{\beta(\nu+1/2-1/\beta)/2}\prod_{1\leq i<j\leq N}|\lambda_j-\lambda_i|^{\beta}\ud\blambda. \label{betalaguerredensity}
\end{equation}
This is the eigenvalue distribution in Equation~(2) of \cite{dumitriuedelman05} with $a=\beta(N+\nu-1/2+1/\beta)/2$. Therefore, the curves in Figure~\ref{figRDunklB}(b) are consistent with the low-temperature approximations of the $\beta$-Laguerre ensembles. Note that in the distribution \eref{approxbeta}, the probability maxima are located at the square root of the zeroes of the Laguerre polynomial of parameter $\nu-1/2$, while in the case of \eref{betalaguerredensity} the probability maxima are attained when the $\{\lambda_i\}_{i=1}^N$ are equal to $\beta$ times the zeroes of $L_N^{(\nu-1/2-1/\beta)}(x)$. (See the proof of Theorem~\ref{freezingtypeb} in Section~\ref{proofs} and \ref{laguerreappendix} for details.) The difference of $1/\beta$ in the Laguerre parameter comes from the variable substitution required to obtain \eref{betalaguerredensity}, that is, the probability maxima are displaced because the Jacobian of the transformation is $\prod_{i=1}^N \lambda_i^{-1/2}$. Therefore, as the value of $\beta$ increases in \eref{approxbeta}, the probability maxima remain in the same location, while the maxima in \eref{betalaguerredensity} depend on $\beta$ and converge to the parameter $\nu-1/2$ once the freezing limit is taken. This is perfectly consistent with the limiting case (a) in Table~2 of \cite{dumitriuedelman05}.

\subsection{The limit $\nu\to\infty$ of the interacting Bessel processes}

For the regime \eref{limitgamma}, we define the following function,
\begin{equation}
\tilde{F}(\bz,\beta,N)=z^2-\beta\sum_{i=1}^N\log z_i^2+\beta N(\log \beta-1).\label{functionFtilde}
\end{equation}

\begin{theorem}\label{freezingoneta}
Assume that $\mu(\bx)$ is a Riemann-integrable distribution with power-law decay $\mu(\bx)\sim x^{-N-\eta}$ at infinity, with $\eta>0$. Let $C$ and $C^\prime$ be large positive constants. Then, for $\nu>C\beta N$, the particle distribution $f(t,\by)$ is given by
\begin{equation}
\fl f(t,\sqrt{\nu t}\bu)(\nu t)^{N/2}\ud\bu=c\,\rme^{-\nu \tilde{F}(\bu,\beta,N)/2}\nu^{N/2}\prod_{1\leq i<j\leq N}|\nu(u_j^2-u_i^2)|^\beta \ud \bu(1+O(t^{-\eta/4}))\label{approxnu}
\end{equation}
with a normalization constant $c$, provided $t> C^\prime/\beta^2\nu^2$. This distribution shows peaks in the neighbourhood of $\bu=\sqrt{\beta}(\pm 1,\ldots,\pm 1)$. Furthermore, in the regime $\nu\to\infty$, 
\begin{equation}
\lim_{\nu\to\infty}f(t,\sqrt{\nu}\bv)\nu^{N/2}\ud\bv=N!\prod_{j=1}^N\sum_{s_j=\pm1}\delta(v_j-\sqrt{\beta t}s_j)\ud\bv\label{freezinglimitoneta}
\end{equation}
for any $\mu(\bx)$ and $t>0$.
\end{theorem}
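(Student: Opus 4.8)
The plan is to follow the same scheme as the proof of Theorem~\ref{freezingtypeb}, replacing the large-$\beta$ form of the intertwining operator $V_B$ by its large-$\nu$ form. First I would write $f(t,\by)$ by inserting the transition probability density, which (as in \eref{besseltpd} for a single particle) is proportional to the type-$B$ weight $\prod_i y_i^{\beta(2\nu+1)/2}\prod_{i<j}|y_j^2-y_i^2|^\beta$ times $\rme^{-(x^2+y^2)/2t}$ times the generalized Bessel function of type $B$, the latter being $V_B$ applied to a $W_B$-symmetrized exponential. After the substitution $\by=\sqrt{\nu t}\bu$, the factor $\rme^{-y^2/2t}$ becomes $\rme^{-\nu u^2/2}$ and the weight supplies $\prod_i u_i^{\beta(2\nu+1)/2}\prod_{i<j}|u_j^2-u_i^2|^\beta$; together these reproduce the exponent $\tilde{F}$ of \eref{functionFtilde} (whose $-\beta\sum_i\log u_i^2$ term is exactly $\prod_i u_i^{\nu\beta}$) and the prefactor $\prod_{i<j}|\nu(u_j^2-u_i^2)|^\beta$. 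The initial configuration enters only through the argument of the generalized Bessel function; exactly as in \eref{almosttheorem1}--\eref{DynamicalExpectationConverging}, for large $t$ this argument becomes small and the integral against $\mu(\bx)$ collapses to a $\bu$-independent constant up to a correction governed by the power-law tail $\mu(\bx)\sim x^{-N-\eta}$, which yields both the relaxation bound $t>C^\prime/\beta^2\nu^2$ and the error factor $(1+O(t^{-\eta/4}))$. This gives \eref{approxnu}.

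The decisive technical input is the behaviour of $V_B$ in the regime \eref{limitgamma}. Using the explicit expression of Proposition~\ref{vktypeb} for $V_B$ acting on symmetric polynomials of $\{x_i^2\}$, I would extract the leading large-$\nu$ asymptotics and show that, after the $\sqrt\nu$ scaling, $V_B$ reduces to the type-$A$ intertwining operator $V_A$ of \cite{andrauskatorimiyashita12}. Intuitively, the origin-repulsion part of $V_B$ (the part carrying the index $\nu$) contributes only a rescaling, while the particle-repulsion part survives as the type-$A$ interaction of exponent $\beta$; this is the mechanism that leaves the Vandermonde factor $\prod_{i<j}|\nu(u_j^2-u_i^2)|^\beta$ in \eref{approxnu}.

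To locate the peaks I would minimise $\tilde{F}(\bu,\beta,N)=u^2-\beta\sum_i\log u_i^2+\beta N(\log\beta-1)$. Since $\tilde{F}$ separates over coordinates, $\partial_{u_i}\tilde{F}=2u_i-2\beta/u_i=0$ gives $u_i^2=\beta$, so the exponential concentrates around $\bu=\sqrt\beta(\pm1,\ldots,\pm1)$; the Vandermonde prefactor, which vanishes at the all-equal point, spreads the actual maxima over an $O(\nu^{-1/2})$ neighbourhood of these points, matching the stated peak locations. For the limit \eref{freezinglimitoneta} I would pass from the $\sqrt{\nu t}$-scaling to the $\sqrt\nu$-scaling (so $\bv=\sqrt t\bu$ and the peaks move to $\pm\sqrt{\beta t}$) and apply a Laplace/concentration argument: as $\nu\to\infty$ the factor $\rme^{-\nu\tilde{F}/2}$ localises each coordinate within a window of width $O(\nu^{-1/2})$ about $\pm\sqrt\beta$, so the rescaled density converges weakly to point masses on the $W_B$-orbit of $(\sqrt{\beta t},\ldots,\sqrt{\beta t})$. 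Because all coordinates of this orbit point coincide, its $W_B$-stabiliser contains the whole permutation group $S_N$; the orbit therefore reduces to the $2^N$ sign images, each carrying the multiplicity $N!$, which is precisely the right-hand side of \eref{freezinglimitoneta}. This is the structural contrast with Theorem~\ref{freezingtypeb}, where the limit point $\bz_N$ has distinct coordinates and the full orbit of size $|W_B|$ appears.

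The main obstacle I anticipate is the large-$\nu$ analysis of $V_B$ and of the generalized Bessel function: one must verify that the reduction to $V_A$ holds uniformly enough that the error terms from the initial-condition averaging and from the $\nu\to\infty$ limit can be taken consistently, and that the subleading corrections are genuinely $O(t^{-\eta/4})$ on the range $\nu>C\beta N$. A second delicate point is the concentration step, because the prefactor $\prod_{i<j}|\nu(u_j^2-u_i^2)|^\beta$ vanishes at the very point where $\rme^{-\nu\tilde{F}/2}$ concentrates, so it cannot simply be evaluated at the minimum. Here the explicit factor $\nu^{\beta N(N-1)/2}$ compensates the $\nu^{-\beta/2}$ produced per pair by the local linearisation $u_j^2-u_i^2\approx2\sqrt\beta(u_j-u_i)$, leaving a finite $\beta$-ensemble-type integral over the $O(\nu^{-1/2})$ fluctuations that contributes only to the normalisation $c$; making this cancellation rigorous and upgrading it to weak convergence of measures is the crux of the delta-function limit \eref{freezinglimitoneta}.
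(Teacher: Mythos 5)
Your proposal follows essentially the same route as the paper's proof: the large-$\nu$ reduction of $V_B$ (via Proposition~\ref{vktypeb}) to the type-$A$ intertwining operator, the same $\epsilon$-splitting of the $\bx$-integral to obtain the $O(t^{-\eta/4})$ relaxation estimate and the bounds $\nu>C\beta N$, $t>C'/\beta^2\nu^2$, the coordinate-wise minimisation of $\tilde F$ at $u_i^2=\beta$, and a concentration argument in which the growing factor $\nu^{\beta N(N-1)/2}$ offsets the vanishing of the Vandermonde prefactor at the limit point (the paper handles this via the mean value theorem on a shrinking neighbourhood $\mathcal{D}_\epsilon$ with $\epsilon\propto\nu^{-\omega}$, which is the same cancellation you describe). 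Your observation that the $N!$ in \eref{freezinglimitoneta} reflects the $S_N$-stabiliser of the all-equal limit point is consistent with the paper's normalisation argument, so the proposal is correct.
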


\begin{figure}[!h]
                \centering
                \subfloat[3 particles, $(\beta,\nu)=(2,16)$.]
                {\includegraphics[width=0.45\textwidth]{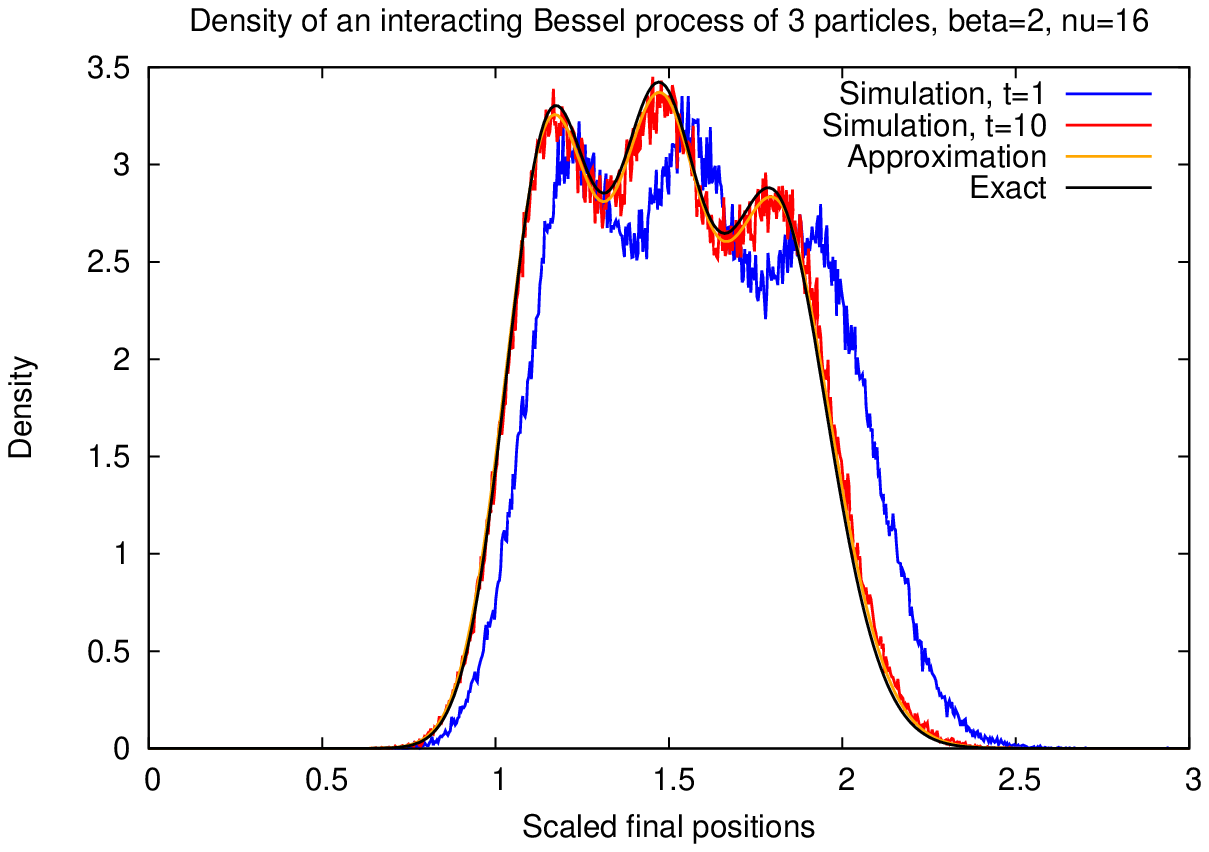}}\
                 \subfloat[7 particles, $\beta=2$]
                {\includegraphics[width=0.45\textwidth]{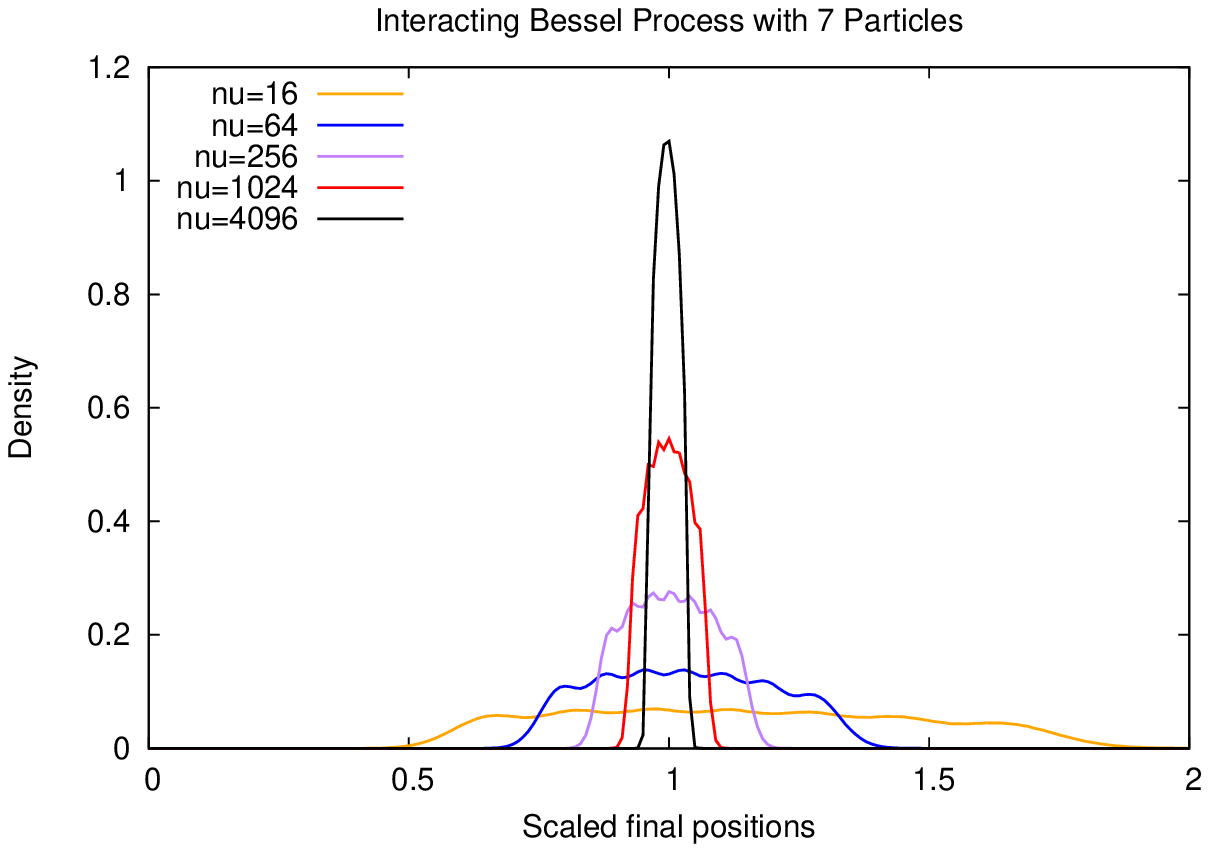}}
                \caption{(Colour online) (a) Exact, simulated and approximated densities with the initial configuration $x_i=i$ and the final positions scaled down by a factor $\sqrt{\nu t}$. (b) Particle density of the interacting Bessel process at $t=1/2$ for $\beta=2$ and several values of $\nu$ with initial configuration $x_i=i\times10^{-2}$, scaled down by a factor of $\sqrt{\beta\nu t}$.}
                \label{figRDunklBoneta}
\end{figure}

In Figure~\ref{figRDunklBoneta}(a), we plot the result of a set of numerical simulations of three particles with the initial configuration $\bx=(1,2,3)$, $\beta=2$ and $\nu=16$. We scale the final positions down by a factor of $\sqrt{\nu t}$, we use time increments of $2\times10^{-4}$ and we carry out $10^6$ iterations. The estimate for the relaxation time is again given by $1>\epsilon>r_\mu^2/t$, which gives $t>14$. The plots show that for $t=1$ the scaled density is already close to the steady-state distribution, though the effect of the initial configuration is still clear. For $t=10$ we obtain a good agreement with the exact solution given by \eref{exactdensity} in spite of the fact that $t=10$ is still smaller than our relaxation time estimate. In addition, we observe that the approximated density derived from \eref{approxnu} is also very close to the exact density \eref{exactdensity}. 

After confirming the agreement of the steady-state distribution \eref{approxnu} with the known case $\beta=2$, we investigate the behaviour of the processes in the regime \eref{limitgamma}. In Figure~\ref{figRDunklBoneta}(b), we depict the particle density as the value of $\nu$ increases for processes of seven particles with $\beta=2$ at time $t=1/2$, scaled down by a factor $\sqrt{\beta \nu t}=\sqrt{\nu}$. The initial configuration is given by $x_i=i\times10^{-2}$, and we perform $10^6$ iterations with time increments of $2\times10^{-4}$. In this case, the relaxation time is estimated to be $t>0.014$ for $\epsilon<1$, meaning that the curves in Figure~\ref{figRDunklBoneta}(b) correspond to the steady-state density for each value of $\nu$. We observe that as $\nu$ grows, the probability of finding the particles at positions close to $\sqrt{\beta t}=1$ increases, illustrating the fact that the particle density converges to a delta function centred at $\sqrt{\beta t}$ as $\nu$ tends to infinity. We observe that for $\nu=16$ and $64$, the particle densities seem slightly shifted to the right, and that for $\nu\geq 256$ the densities are centred at $\sqrt{\beta t}$. We also observe that in the regime \eref{limitgamma} the interaction between particles is negligible when compared with the repulsion from the origin. 
%Indeed, as shown in Figure~\ref{figRDunklBoneta}(b), when $\nu$ tends to infinity, the particles group together creating one large probability peak centred at $\sqrt{\nu}$. 
This occurs because, as $\nu\to\infty$, the Bessel dimension tends to infinity for all the particles, so they all move away from the origin at similar rates. In addition, the third and fourth terms on the rhs of \eref{typebradialdunklkbe} become negligible in this limit, so the process behaves approximately like a system of $N$ independent Bessel processes.

%In the regime \eref{limitbeta}, the parameter $\nu$ remains in \eref{freezinglimittypeb}, contained in the vector $\bz_N$. We plot the behaviour of its components for several values of $\nu$ in Figure~\ref{fig7thLaguerreZeroes}(a), where we observe that they increase with $\nu$. In Figure~\ref{fig7thLaguerreZeroes}(b), we plot the components of $\bz_N$ after scaling them down by a factor of $\sqrt\nu$. As $\nu$ grows, they tend to a single value, 1, showing a similar behaviour as the regime \eref{limitgamma} in Figure~\ref{figRDunklBoneta}(a). This suggests that the behaviour of the interacting Bessel process in the regime \eref{limitgamma} is also valid in the case where $\beta\to\infty$.
%
%\begin{figure}[!h]
%                \centering
%                \subfloat[$-1/2\leq\nu\leq 9$]
%                {\includegraphics[width=0.45\textwidth]{7thLaguerreZeroes.eps}}\ 
%                \subfloat[$1\leq\nu\leq 1000$]
%                {\includegraphics[width=0.45\textwidth]{7thLaguerreZeroesOverEta.eps}}\\ 
%                \caption{(Colour online) Square root of the zeroes of $L_7^{\nu-1/2}(x)$.}
%                \label{fig7thLaguerreZeroes}
%\end{figure}

\section{The intertwining operator of type $B$}\label{sectionvb}

In this section, we give an expression for the intertwining operator $V_B$ derived from the generalized Bessel function of type $B$. The TPD $p(t,\by|\bx)$ is given by \cite{rosler08}
\begin{equation}
p(t,\by|\bx)=w_B\left(\frac{\by}{\sqrt{t}}\right)\frac{\rme^{-(y^2+x^2)/2t}}{c_B t^{N/2}}\sum_{\rho\in W_B}V_B \exp \left(\frac{\by\cdot\rho\bx}{t}\right).\label{TPDRadialB}
\end{equation}
Here, $w_B(\bx)$ is given by
\begin{equation}
w_B(\bx)=\prod_{i=1}^N|x_i|^{\beta(\nu+1/2)}\prod_{1\leq i<j\leq N}|x_j^2-x_i^2|^{\beta},\label{wkb}
\end{equation}
and $c_B$ is given by the Selberg integral \cite{mehta04}
\begin{eqnarray}
\fl c_B&=&\int_{\RR^N}\rme^{-x^2/2}\prod_{j=1}^N|x_j|^{\beta(\nu+1/2)}\prod_{1\leq i<j\leq N}|x_j^2-x_i^2|^{\beta}\ud\bx\nonumber\\
\fl&=&2^{N(\beta(\nu+1/2)+1)/2+\beta N(N-1)/2}\prod_{j=1}^N\frac{\Gamma(1+j\beta/2)\Gamma[\beta(\nu+j-1/2)/2+\frac{1}{2}]}{\Gamma(\beta/2+1)}.\label{selbergb}
\end{eqnarray}
The TPD \eref{TPDRadialB} is obtained by using a generalization of the Fourier transform, called the Dunkl transform, to solve the Dunkl analogue of the heat equation, given in \ref{generalreview}, equation~\eref{dunklheatsymbolic}; the solution is then made symmetric under the action of the group $W_B$. This symmetrization is the origin of the generalized Bessel function, which is the sum over $\rho$ in \eref{TPDRadialB}. The generalized Bessel function of type $B$ was first obtained in \cite{bakerforrester97}, and it is given by
\begin{equation}
 \sum_{\rho\in W_{B}}V_B \rme^{\bx\cdot\rho\by}=2^N N! \FF{2/\beta}\left(\frac{\beta}{2}(\nu+ N-1/2) +\frac{1}{2};\frac{(\bx)^2}{2},\frac{(\by)^2}{2}\right),\label{genbesseltypeb}
\end{equation}
where we use the notation $(\bx)^2=(x_1^2,\ldots,x_N^2)$. 
We give a brief explanation of the root system of type $B$, its associated operators and the reflection group $W_B$ in \ref{generalreview}. We introduce other functions and quantities necessary for our results as follows. The generalized hypergeometric function $\FF{\alpha}(b;\bx,\by)$ is given by
\begin{eqnarray}
\FF{\alpha}(b;\bx,\by)&=&\sum_{n=0}^\infty\sum_{\substack{\tau:l(\tau)\leq N\cr |\tau|=n}}\frac{c_\tau (\alpha)}{c_\tau^\prime (\alpha)}\frac{\PP{\tau}{\alpha}(\bx)\PP{\tau}{\alpha}(\by)}{(b)_\tau^{(\alpha)}(N/\alpha)_\tau^{(\alpha)}}\label{genhypergeob}.
\end{eqnarray}
Note that the dependence on $N$ of $\FF{\alpha}(b;\bx,\by)$ is implied. In this expression, $\tau$ is an integer partition of length $l(\tau)$ and modulus (or total sum) $|\tau|$ and $\PP{\tau}{\alpha}(\bx)$ is a Jack polynomial. The expression $(i,j)\in\tau$ implies that $1\leq i\leq l(\tau)$ and $1\leq j \leq \tau_i$. The quantities $c_\tau (\alpha), c_\tau^\prime (\alpha)$ and $(b)_\tau^{(\alpha)}$ are given by
\begin{eqnarray}
c_\tau(\alpha)=\prod_{(i,j)\in \tau}(\alpha(\tau_i-j)+\tau_j^\prime-i+1),\label{hooksnocorner}\\
c_\tau^\prime(\alpha)=\prod_{(i,j)\in \tau}(\alpha(\tau_i-j+1)+\tau_j^\prime-i),\label{hookscorner}
\end{eqnarray}
and
\begin{equation}\label{GPS}
(b)_\tau^{(\alpha)}=\prod_{i=1}^{l(\tau)}\frac{\Gamma(b-(i-1)/\alpha+\tau_i)}{\Gamma(b-(i-1)/\alpha)}.
\end{equation}

The monomial symmetric polynomial $m_\lambda(\bx)$ is given by
\begin{equation}
m_\lambda(\bx)=\sum_\sigma \prod_{i=1}^N x_i^{\lambda_{\sigma(i)}},
\end{equation}
where $\sigma$ is a permutation such that each of the monomials in the sum is different. Jack polynomials and monomial symmetric polynomials are related by the equation
\begin{equation}\label{JackP}
\PP{\tau}{\alpha}(\bx)=\sum_{\substack{\lambda:\lambda\leq\tau\cr |\lambda|=|\tau|}}u_{\tau\lambda}(\alpha)m_{\lambda}(\bx),
\end{equation}
where $u_{\tau\lambda}(\alpha)$ is an upper triangular matrix indexed by integer partitions with diagonal elements equal to one. The expression $\lambda\leq\tau$ refers to the natural ordering of integer partitions defined by
\begin{equation}
\lambda\leq\tau\ \Leftrightarrow\ \sum_{i=1}^j \lambda_i\leq  \sum_{i=1}^j \tau_i
\end{equation}¥
for $1\leq j\leq N$. Finally, we will use the multinomial coefficient
\begin{equation}
M(\lambda,N)=\frac{N!}{l_1^\lambda!\ldots l_p^\lambda!},
\end{equation}
which counts the number of distinct permutations of $\lambda$ assuming that it is a partition of $N$ parts (including zeroes) and $p$ distinct parts; $l_i^\lambda$ denotes the number of parts which are equal to the $i$th distinct part of $\lambda$. We refer to \cite{bakerforrester97} for details on the construction and properties of the generalized hypergeometric functions and to \cite{macdonald} (chapter 6, section 10) for details on Jack polynomials. We can now state the following.

\begin{proposition}\label{vktypeb}
The effect of $V_B$ on symmetric polynomials of the variables $\{x_i^{2}\}_{i=1}^N$ is given by
\begin{equation}
\fl V_B m_\lambda[(\bx)^2]=\frac{(2\lambda)!M(\lambda,N)}{2^{2|\lambda|}}\!\!\!\!\sum_{\substack{\tau:l(\tau)\leq N\cr |\tau|= |\lambda|}}\frac{c_\tau (2/\beta)}{c_\tau^\prime (2/\beta)}\frac{u_{\tau\lambda}(2/\beta)\PP{\tau}{2/\beta}[(\bx)^2]}{(\beta N/2)_\tau^{(2/\beta)}(\beta[\nu+N-1/2]/2+\frac{1}{2})_\tau^{(2/\beta)}}.\label{vktypebequation}
\end{equation}
\end{proposition}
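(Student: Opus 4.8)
The plan is to read off the action of $V_B$ on each monomial symmetric polynomial $m_\lambda[(\bx)^2]$ by expanding both sides of the generating identity \eref{genbesseltypeb} as power series in $\by$ and matching coefficients. Throughout I regard $V_B$ as a linear, degree-preserving operator acting on the $\bx$-argument (on the right-hand side of \eref{genbesseltypeb} the roles of $\bx$ and $\by$ are symmetric, so this entails no loss of generality). The essential point is that on the left-hand side the sum over $W_B$, combined with the expansion of the exponential, isolates exactly the symmetric monomials $m_\lambda[(\bx)^2]$ on which we want $V_B$ to act, whereas the right-hand side is already written in the Jack-polynomial basis, which I convert to the monomial basis through \eref{JackP}.

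First I would treat the left-hand side. Writing $\rme^{\bx\cdot\rho\by}=\sum_{\boldsymbol{k}}\bx^{\boldsymbol{k}}(\rho\by)^{\boldsymbol{k}}/\boldsymbol{k}!$ in multi-index notation and pulling the linear operator $V_B$ through the series gives $\sum_{\rho\in W_B}V_B\rme^{\bx\cdot\rho\by}=\sum_{\boldsymbol{k}}(V_B\bx^{\boldsymbol{k}})(\boldsymbol{k}!)^{-1}\sum_{\rho\in W_B}(\rho\by)^{\boldsymbol{k}}$. Because $W_B$ contains all sign changes, the inner sum vanishes unless every component of $\boldsymbol{k}$ is even; for $\boldsymbol{k}=2\bmm$ the sign changes contribute a factor $2^N$, while the permutation part of $W_B$ turns $\sum_\rho(\rho\by)^{2\bmm}$ into $(N!/M(\lambda,N))\,m_{2\lambda}(\by)$, where $\lambda$ is the partition obtained by sorting the exponent vector $\bmm$. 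Grouping the compositions $\bmm$ according to their sorted partition $\lambda$ and using $\sum_{\bmm:\,\mathrm{sort}(\bmm)=\lambda}\bx^{2\bmm}=m_\lambda[(\bx)^2]$ together with the linearity of $V_B$, the left-hand side collapses to $2^N N!\sum_\lambda \frac{V_B m_\lambda[(\bx)^2]}{(2\lambda)!\,M(\lambda,N)}\,m_{2\lambda}(\by)$, so that the coefficient of $m_{2\lambda}(\by)=m_\lambda[(\by)^2]$ is $2^N N!\,(2\lambda)!^{-1}M(\lambda,N)^{-1}V_B m_\lambda[(\bx)^2]$.

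For the right-hand side I would insert the series \eref{genhypergeob} for $\FF{2/\beta}$, use the homogeneity of the Jack polynomials to pull the factors of $1/2$ out as $2^{-2|\tau|}$ (noting that $|\tau|=|\lambda|$ on the surviving terms, so this becomes $2^{-2|\lambda|}$), and expand $\PP{\tau}{2/\beta}[(\by)^2]$ in monomials via \eref{JackP}. Reading off the coefficient of $m_{2\lambda}(\by)$ and equating it to the left-hand coefficient—which is legitimate because the symmetric monomials $m_{2\lambda}(\by)$ are linearly independent as functions of $\by$—and then solving for $V_B m_\lambda[(\bx)^2]$ produces \eref{vktypebequation}, with $b=\frac{\beta}{2}(\nu+N-1/2)+1/2$ and $N/\alpha=\beta N/2$ supplying the two Pochhammer symbols. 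I expect the main obstacle to lie in the bookkeeping of the left-hand expansion: correctly separating the $2^N$ coming from the sign changes, tracking the multinomial factor $N!/M(\lambda,N)$ from the permutation sum, and regrouping the composition sum into a sum over partitions so that the linearity of $V_B$ reassembles precisely $m_\lambda[(\bx)^2]$. Once this is done, the right-hand manipulations and the final coefficient comparison are routine.
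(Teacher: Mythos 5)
Your proposal is correct and follows essentially the same route as the paper's proof in \ref{propositionVBproof}: both expand the $W_B$-symmetrized exponential (your left-hand computation is exactly the paper's Equation~\eref{exponentialexpansion}), apply $V_B$ term by term, and extract the coefficient relation from the generalized Bessel identity \eref{genbesseltypeb}. The only cosmetic difference is that you match coefficients in the monomial basis by expanding the Jack polynomials of $\by$ via \eref{JackP}, whereas the paper matches coefficients of $\PP{\tau}{2/\beta}[(\by)^2]$ using the inverse matrix $(u^{-1})_{\mu\tau}$ and then multiplies back by $u_{\tau\lambda}$ --- the same linear algebra performed in the opposite order.
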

We give the proof of this statement in \ref{propositionVBproof}.

\section{Proofs of Theorems}\label{proofs}

\subsection{Proof of Theorem~\ref{freezingtypeb}}

For this proof, we require two lemmas, the first concerning the intertwining operator \eref{vktypebequation} and the second concerning the generalized Bessel function \eref{genbesseltypeb}. For comparison, we list the expressions for the intertwining operator of type $A$ \cite{andrauskatorimiyashita12}:
\begin{equation}\label{VkTypeA}V_A m_\lambda(\bx)=\lambda!M(\lambda,N)\sum_{\substack{\tau:l(\tau)\leq N\cr |\tau|= |\lambda|}}\frac{c_\tau (2/\beta)}{c_\tau^\prime (2/\beta)}\frac{u_{\tau\lambda}(2/\beta)}{(\beta N/2)_\tau^{(2/\beta)}}\PP{\tau}{2/\beta}(\bx),
\end{equation}
\begin{equation}\label{frozenintertwiningoperatorA}
\lim_{\beta\to\infty}V_A m_\lambda(\bx)=\frac{M(\lambda,N)}{N^{|\lambda|}}\left(\sum_{j=1}^Nx_j\right)^{|\lambda|},
\end{equation}
\begin{equation}\label{FrozenHypergeometricFunctionA}
\lim_{\beta\to\infty}\sum_{\rho\in S_N}V_A \rme^{\by\cdot\rho\bx}=N!\exp\Big[\frac{1}{N}\Big(\sum_{i=1}^N x_i\Big)\Big(\sum_{j=1}^N y_j\Big)\Big].
\end{equation}

\begin{lemma}\label{freezingvk}
The limit \eref{limitbeta} of $V_B$ is given by
\begin{equation}\label{freezingvkequation}
\fl\lim_{\beta\to\infty}V_B\beta^{|\lambda|}m_\lambda[(\bx)^2]=\frac{(2\lambda)!M(\lambda,N)}{2^{|\lambda|}\lambda!N^{|\lambda|}(\nu+N-1/2)^{|\lambda|}}\left(\sum_{j=1}^Nx_j^2\right)^{|\lambda|}\!\!\!\!=\frac{(2\lambda)!}{\lambda!}\lim_{\beta\to\infty}V_Am_\lambda(\bu),
\end{equation}
with $\bu=(\bx)^2/[2(\nu+N-1/2)]$ on the rhs.
\end{lemma}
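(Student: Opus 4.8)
The plan is to compute the $\beta\to\infty$ limit of the explicit formula \eref{vktypebequation} for $V_B m_\lambda[(\bx)^2]$ term by term, showing that only the leading term survives and that it reduces to the stated closed form. First I would multiply \eref{vktypebequation} by $\beta^{|\lambda|}$ and examine the $\beta$-dependence of each factor in the sum over partitions $\tau$ with $|\tau|=|\lambda|$. The key observation is that as $\beta\to\infty$, the parameter $\alpha=2/\beta\to 0$, so I need the small-$\alpha$ asymptotics of the combinatorial quantities $c_\tau(2/\beta)$, $c_\tau'(2/\beta)$, the generalized Pochhammer symbols $(\beta N/2)_\tau^{(2/\beta)}$ and $(\beta[\nu+N-1/2]/2+\tfrac12)_\tau^{(2/\beta)}$, the transition coefficients $u_{\tau\lambda}(2/\beta)$, and the Jack polynomials $\PP{\tau}{2/\beta}[(\bx)^2]$.

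The central step is to track the power of $\beta$ contributed by each factor. From \eref{GPS}, each generalized Pochhammer symbol $(b)_\tau^{(2/\beta)}$ with $b$ proportional to $\beta$ behaves like $\prod_i b^{\tau_i} \sim (\beta\,\text{const})^{|\tau|}$ to leading order, since the shifts $-(i-1)/\alpha = -(i-1)\beta/2$ are also $O(\beta)$ but combine with $b$; I would verify that the two Pochhammer symbols together produce a factor $\sim (\beta N/2)^{|\lambda|}(\beta(\nu+N-1/2)/2)^{|\lambda|}$, which cancels against $\beta^{|\lambda|}$ and isolates the denominators $N^{|\lambda|}(\nu+N-1/2)^{|\lambda|}$. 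Meanwhile, $c_\tau(2/\beta)/c_\tau'(2/\beta) \to 1$ as $\alpha\to 0$ (from \eref{hooksnocorner}–\eref{hookscorner}, each factor ratio tends to $\tau_j'-i+1$ over $\tau_j'-i$, but I must check the combined product converges to a finite constant), and the Jack polynomial $\PP{\tau}{2/\beta}[(\bx)^2]$ tends to the monomial-power structure governing the limit. I anticipate that the key simplification is that in the $\alpha\to 0$ limit the Jack polynomials $\PP{\tau}{\alpha}$ degenerate so that $\sum_\tau u_{\tau\lambda}\PP{\tau}{\alpha}$ collapses to a single dominant contribution proportional to $(\sum_j x_j^2)^{|\lambda|}$, exactly as happens in the type-$A$ case \eref{frozenintertwiningoperatorA}.

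The cleanest route is to exploit the structural parallel with type $A$ rather than redo all asymptotics from scratch. Comparing \eref{vktypebequation} with \eref{VkTypeA}, the type-$B$ formula is obtained from the type-$A$ formula by the replacements $\lambda!\mapsto (2\lambda)!/2^{2|\lambda|}$, by inserting the extra Pochhammer factor $(\beta[\nu+N-1/2]/2+\tfrac12)_\tau^{(2/\beta)}$ in the denominator, and by evaluating at $(\bx)^2$. Therefore I would first establish the type-$A$ limit \eref{frozenintertwiningoperatorA} (already known from \cite{andrauskatorimiyashita12}) and then argue that the additional Pochhammer factor contributes precisely $(\nu+N-1/2)^{|\lambda|}$ to leading order in $\beta$ while leaving the partition-summation structure otherwise unchanged. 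This gives
\[
\lim_{\beta\to\infty}V_B\beta^{|\lambda|}m_\lambda[(\bx)^2]
=\frac{(2\lambda)!/2^{2|\lambda|}}{(\nu+N-1/2)^{|\lambda|}}\cdot\frac{M(\lambda,N)}{N^{|\lambda|}}\Big(\sum_{j=1}^N x_j^2\Big)^{|\lambda|},
\]
and then I would simplify the numerical prefactor $(2\lambda)!/(2^{2|\lambda|})\cdot 1/? $ to match $(2\lambda)!/(2^{|\lambda|}\lambda!)$; reconciling the powers of two here (from $2^{2|\lambda|}$ versus the $2^{|\lambda|}\lambda!$ appearing in the claim, combined with the $2(\nu+N-1/2)$ inside $\bu$) is a bookkeeping point I would check carefully. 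Finally, setting $\bu=(\bx)^2/[2(\nu+N-1/2)]$ and recognizing $M(\lambda,N) N^{-|\lambda|}(\sum_j u_j)^{|\lambda|} = \lim_{\beta\to\infty}V_A m_\lambda(\bu)$ via \eref{frozenintertwiningoperatorA} establishes the second equality.

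The main obstacle I expect is controlling the small-$\alpha$ behaviour of $u_{\tau\lambda}(2/\beta)$ and the Jack polynomials simultaneously: the individual transition coefficients may diverge or vanish as $\alpha\to 0$, so I cannot take limits factor by factor inside the sum over $\tau$. The safe approach is to not analyze $u_{\tau\lambda}\PP{\tau}{2/\beta}$ separately but to use the known degeneration of the whole combination — namely that in the freezing limit the generalized hypergeometric structure collapses so that only the power-sum $(\sum_j x_j^2)^{|\lambda|}$ survives, which is already encoded in the type-$A$ result I am borrowing. Thus the real work is justifying that the extra type-$B$ Pochhammer factor contributes a clean multiplicative constant $(\nu+N-1/2)^{|\lambda|}$ in the limit and does not interfere with the dominance of the top term in the $\tau$-sum; this is a uniformity argument about which partition dominates as $\alpha\to 0$, and it is where I would spend most of the care.
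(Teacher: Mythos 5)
Your overall strategy --- taking the $\beta\to\infty$ limit of \eref{vktypebequation} factor by factor, and leaning on the type-$A$ degeneration so that only one partition survives --- is exactly the route the paper takes. However, two of your intermediate asymptotic claims are wrong as stated, and following them literally would give zero rather than the claimed limit. First, $c_\tau(2/\beta)/c_\tau'(2/\beta)$ does \emph{not} tend to $1$: each column $j$ of $\tau$ contains exactly one box with $\tau_j'-i=0$, so by \eref{hookscorner} the product $c_\tau'(2/\beta)$ vanishes like $(2/\beta)^{\tau_1}$ and the ratio diverges like $\beta^{\tau_1}$. Second, your claim that the two Pochhammer symbols jointly contribute $\sim\beta^{2|\lambda|}$ which is ``cancelled'' by the single prefactor $\beta^{|\lambda|}$ is inconsistent: a net $\beta^{-|\lambda|}$ would kill every term. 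The correct bookkeeping is that
\begin{equation*}
\frac{c_\tau(2/\beta)}{c_\tau'(2/\beta)\,(\beta N/2)_\tau^{(2/\beta)}}=O\bigl(\beta^{\tau_1-|\tau|}\bigr),
\end{equation*}
so this combination has a finite nonzero limit precisely when $\tau$ is a single row ($\tau_1=|\tau|$), where it equals $1/(N^{|\tau|}|\tau|!)$, and vanishes otherwise. This single observation is simultaneously the resolution of the ``which partition dominates'' uniformity issue you flagged and the source of the compensating $\beta^{|\tau|}$ that you were missing in the power count.

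Once that is in place, the prefactor $\beta^{|\lambda|}$ is needed only to balance the extra type-$B$ Pochhammer symbol, and on the surviving single-row partition one gets $\lim_{\beta\to\infty}\beta^{|\tau|}/(\frac{\beta}{2}[\nu+N-1/2]+\frac{1}{2})_{(|\tau|,0,\ldots)}^{(2/\beta)}=2^{|\tau|}/(\nu+N-1/2)^{|\tau|}$; note the factor $2^{|\tau|}$, which is exactly the power of two you could not reconcile in your final display. The remaining ingredient is the degeneration $\lim_{\beta\to\infty}u_{(|\lambda|,0,\ldots),\lambda}(2/\beta)\,\PP{(|\lambda|,0,\ldots)}{2/\beta}[(\bx)^2]=(|\lambda|!/\lambda!)\bigl(\sum_{j}x_j^2\bigr)^{|\lambda|}$, which supplies the $1/\lambda!$ missing from your prefactor. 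With these three corrections your outline assembles into precisely the paper's proof, so the gap is not in the strategy but in the $\beta$-asymptotics of $c_\tau'$, on which everything else hinges.
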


\begin{proof}
We multiply \eref{vktypebequation} by $\beta^{|\lambda|}=\beta^{|\tau|}$ and consider the factors inside the sum separately. First, following the proof of Theorem 3 in \cite{andrauskatorimiyashita12}, we find that the factor
\begin{equation}
\fl\frac{c_\tau(2/\beta)}{c_\tau^\prime (2/\beta)(\beta N/2)_\tau^{(2/\beta)}}=\prod_{(i,j)\in \tau}\frac{(\tau_i-j+\beta(\tau_j^\prime-i+1)/2)}{(\beta(N-i+1)/2+j-1)(\tau_i-j+1+\beta(\tau_j^\prime-i)/2)}
\end{equation}
has a freezing limit given by
\begin{equation}
\lim_{\beta\to\infty}\frac{c_\tau(2/\beta)}{c_\tau^\prime (2/\beta)(\beta N/2)_\tau^{(2/\beta)}}=\left\{
\begin{array}{rl}
0 & \text{if }l(\tau)>1,\\
\frac{1}{N^{|\tau|}|\tau|!} & \text{if }l(\tau)=1.
\end{array}\right.\label{filteringterm}
\end{equation}
In addition, we have the limit 
\begin{eqnarray}
\fl\lim_{\beta\to\infty}\frac{\beta^{|\tau|}}{(\frac{\beta}{2}[\nu+N-1/2]+\frac{1}{2})_\tau^{(2/\beta)}}&=&\lim_{\beta\to\infty}\prod_{(i,j)\in\tau}\frac{\beta}{\frac{\beta}{2}(\nu+1/2+N-i)+j-\frac{1}{2}}\nonumber\\
&=&\prod_{(i,j)\in\tau}\frac{2}{\nu+1/2+N-i},
\end{eqnarray}
but because only the term where $l(\tau)=1$ in \eref{filteringterm} survives in the limit, we only need to calculate
\begin{equation}
\lim_{\beta\to\infty}\frac{\beta^{|\tau|}}{(\beta[\nu+N-1/2]/2+\frac{1}{2})_{(|\tau|,0\ldots)}^{(2/\beta)}}=\frac{2^{|\tau|}}{(\nu+N-1/2)^{|\tau|}}.
\end{equation}
Finally, we know that (\cite{macdonald}, p. 380 combined with Equation~(43) in \cite{andrauskatorimiyashita12})
\begin{equation}
\lim_{\beta\to\infty}u_{(|\tau|,0,\ldots),\lambda}(2/\beta)\PP{(|\tau|,0,\ldots)}{2/\beta}[(\bx)^2]=\frac{|\tau|!}{\lambda!}\left(\sum_{j=1}^Nx_j^2\right)^{|\lambda|},
\end{equation}
and due to the condition in the sum in \eref{vktypebequation}, $|\tau|=|\lambda|$. Inserting all of the above limits in \eref{vktypebequation} multiplied by $\beta^{|\tau|}$ gives the first equality. For the second equality, it suffices to note that, after defining $\bu=(\bx)^2/[2(\nu+N-1/2)]$, we obtain
\begin{equation}
\frac{(2\lambda)!M(\lambda,N)}{2^{|\lambda|}\lambda!N^{|\lambda|}(\nu+N-1/2)^{|\lambda|}}\left(\sum_{j=1}^Nx_j^2\right)^{|\lambda|}=\frac{(2\lambda)!M(\lambda,N)}{\lambda!N^{|\lambda|}}\left(\sum_{j=1}^Nu_j\right)^{|\lambda|}.
\end{equation}
Comparing the rhs with \eref{frozenintertwiningoperatorA} completes the proof.
\end{proof}
Note that \eref{freezingvkequation} implies that the limit without scaling $\lim_{\beta\to\infty}V_Bm_\lambda[(\bx)^2]=0$ for any partition $\lambda$ with $|\lambda|\geq 1$. With this, we can prove the following lemma.

\begin{lemma}\label{FrozenDunklKernelB}
The $\beta\to\infty$ limit of the generalized Bessel function of type $B$ is given by
\begin{equation}
\fl\lim_{\beta\to\infty}\sum_{\rho\in W_{B}}V_B \rme^{\sqrt{\beta}\by\cdot\rho\bx}=2^NN!\exp\left(\frac{y^2 x^2}{2N(\nu+N-1/2)}\right)=2^N\lim_{\beta\to\infty}\sum_{\rho\in S_N}V_A \rme^{\bu\cdot\rho\bv},\label{vkinftyk}
\end{equation}
with $\bu=(\bx)^2/\sqrt{2(\nu+N-1/2)}$ and $\bv=(\by)^2/\sqrt{2(\nu+N-1/2)}$ on the rhs.
\end{lemma}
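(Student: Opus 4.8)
The plan is to expand the exponential $\rme^{\sqrt{\beta}\,\by\cdot\rho\bx}$ in a power series and then exploit Lemma \ref{freezingvk}, which already tells us exactly how $V_B$ collapses on monomial symmetric polynomials of squared variables in the freezing limit. The central observation is that the whole sum $\sum_{\rho\in W_B}V_B\,\rme^{\sqrt\beta\,\by\cdot\rho\bx}$ is, by \eref{genbesseltypeb}, a $W_B$-symmetric function of both $\bx$ and $\by$, hence expressible in terms of symmetric polynomials of the squared variables $(\bx)^2$ and $(\by)^2$. Therefore I would first write the left-hand side using the generalized-Bessel-function identity \eref{genbesseltypeb}, replacing $\bx\to\sqrt\beta\,\bx$ (equivalently rescaling the argument), so that the kernel becomes $2^N N!\,\FF{2/\beta}\!\bigl(\tfrac{\beta}{2}(\nu+N-1/2)+\tfrac12;\,\tfrac{\beta(\bx)^2}{2},\tfrac{(\by)^2}{2}\bigr)$ or, more conveniently, expand $\sum_{\rho}V_B\,\rme^{\sqrt\beta\,\by\cdot\rho\bx}$ directly as a power series in the bilinear form and apply Proposition \ref{vktypeb} term by term.

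Concretely, I would write $\rme^{\sqrt\beta\,\by\cdot\rho\bx}=\sum_{n\geq 0}\tfrac{\beta^{n/2}}{n!}(\by\cdot\rho\bx)^n$, sum over $\rho\in W_B$, and observe that only even powers of each coordinate survive the sign-change symmetrization, so the series reorganizes into a series in the monomials $m_\lambda[(\bx)^2]\,m_\lambda[(\by)^2]$ carrying a factor $\beta^{|\lambda|}$. At this point Lemma \ref{freezingvk} applies directly: the scaled limit $\lim_{\beta\to\infty}V_B\,\beta^{|\lambda|}m_\lambda[(\bx)^2]$ is precisely the frozen type-$A$ expression $\tfrac{(2\lambda)!}{\lambda!}\lim_{\beta\to\infty}V_A m_\lambda(\bu)$ with $\bu=(\bx)^2/[2(\nu+N-1/2)]$, which by \eref{frozenintertwiningoperatorA} is a pure power of $\sum_j x_j^2$. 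Summing the resulting series should reconstruct an exponential, and comparison with \eref{FrozenHypergeometricFunctionA} then yields the stated identification with the frozen type-$A$ kernel $2^N\lim_{\beta\to\infty}\sum_{\rho\in S_N}V_A\,\rme^{\bu\cdot\rho\bv}$, with the substitutions $\bu=(\bx)^2/\sqrt{2(\nu+N-1/2)}$ and $\bv=(\by)^2/\sqrt{2(\nu+N-1/2)}$. The combinatorial factors $(2\lambda)!/\lambda!$ and the powers of $2$ must be tracked carefully to confirm that the bilinear argument assembles into $y^2x^2/[2N(\nu+N-1/2)]$, matching the middle expression in \eref{vkinftyk}.

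The main obstacle I anticipate is the interchange of the limit $\beta\to\infty$ with the infinite summation over $n$ (equivalently over partitions $\lambda$). The term-by-term freezing limits from Lemma \ref{freezingvk} are clean, but justifying that the tail of the series is uniformly controlled as $\beta\to\infty$ requires a dominated-convergence-type argument. The natural route is to obtain a $\beta$-uniform bound on each coefficient $\tfrac{c_\tau(2/\beta)}{c_\tau'(2/\beta)}(\beta N/2)_\tau^{-(2/\beta)}\,\beta^{|\tau|}(\tfrac{\beta}{2}[\nu+N-1/2]+\tfrac12)_\tau^{-(2/\beta)}$ appearing in \eref{vktypebequation}, showing it is dominated by the coefficients of a convergent exponential series (for instance by bounding the $l(\tau)>1$ contributions, which vanish in the limit, uniformly in $\beta$ above some threshold). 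Once the interchange is legitimate, the remaining work is the bookkeeping that converts the frozen monomial sum into the closed-form exponential and verifies the constant $2^N N!$ and the exponent, which is routine given \eref{FrozenHypergeometricFunctionA}.
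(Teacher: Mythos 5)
Your proposal follows essentially the same route as the paper: expand $\sum_{\rho\in W_B}\rme^{\sqrt{\beta}\by\cdot\rho\bx}$ into the series $2^NN!\sum_\mu m_\mu[(\by)^2]\,V_B\beta^{|\mu|}m_\mu[(\bx)^2]/[(2\mu)!M(\mu,N)]$ (the paper's Equation~\eref{exponentialexpansion}), apply Lemma~\ref{freezingvk} term by term, resum into the exponential, and compare with \eref{FrozenHypergeometricFunctionA}. Your added concern about justifying the interchange of the $\beta\to\infty$ limit with the infinite sum is a point of rigor the paper's proof does not explicitly address, but it does not change the argument.
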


\begin{proof}
We begin by using \eref{exponentialexpansion} to obtain
\begin{equation}
\sum_{\rho\in W_{B}}V_B \rme^{\sqrt{\beta}\by\cdot\rho\bx}=2^NN!\sum_{\mu:l(\mu)\leq N}\frac{m_\mu[(\by)^2]}{(2\mu)!M(\mu,N)}V_B \beta^{|\mu|} m_\mu[(\bx)^2].
\end{equation}
Taking the freezing limit and using \eref{freezingvkequation} yields
\begin{eqnarray}
\fl \lim_{\beta\to\infty}\sum_{\rho\in W_{B}}V_B \rme^{\sqrt{\beta}\by\cdot\rho\bx}&=&2^NN!\sum_{\mu:l(\mu)\leq N}\frac{m_\mu[(\by)^2]}{2^{|\mu|}\mu!N^{|\mu|}(\nu+N-1/2)^{|\mu|}}\left(\sum_{j=1}^Nx_j^2\right)^{|\mu|}\nonumber\\
&=&2^NN!\sum_{\mu:l(\mu)\leq N}\frac{1}{\mu!}m_\mu\left[\frac{(\by)^2x^2}{2N(\nu+N-1/2)}\right]\nonumber\\
&=&2^NN!\exp\left(\frac{y^2x^2}{2N(\nu+N-1/2)}\right).
\end{eqnarray}
We complete the proof by substituting $\bu=(\bx)^2/\sqrt{2(\nu+N-1/2)}$ and $\bv=(\by)^2/\sqrt{2(\nu+N-1/2)}$. Comparing with \eref{FrozenHypergeometricFunctionA} gives the second equality in \eref{vkinftyk}.
\end{proof}

For the proof of Theorem~\ref{freezingtypeb}, we will need to approximate the generalized Bessel function of type $B$ for $\beta\gg 1$ finite. From Theorem~1.1 and Corollary~5.3 of \cite{rosler99}, there exists a probability measure $\mu_{\bx}(\by)$ with support contained by the convex hull of $W_B \bx=\{\rho \bx : \rho\in W_B\}$ such that 
\begin{equation}
V_B f(\bx)=\int_{\RR^N}f(\by)\ud\mu_{\bx}(\by)
\end{equation}
for any function $f(\bx)$ that is analytical and bounded at finite $\bx$. From this fact, it follows that for all values of $\bx$ and $\by$ and finite $\beta$,
\begin{equation}\label{DunklKernelBound}
|V_B \rme^{\sqrt{\beta}\bx\cdot\by}|\leq \int_{\RR^N}|\rme^{\sqrt{\beta}\bx\cdot\by}|\ud\mu_{\bx}(\by)\leq \rme^{\sqrt{\beta}xy}.
\end{equation}
This means that the result in Lemma~\ref{FrozenDunklKernelB} can be used as an approximation of the generalized Bessel function for $\beta\gg 1$ when $xy\leq2\sqrt{\beta}N(N+\nu-1/2)$. Otherwise, we can only use the bound~\eref{DunklKernelBound}.

The lower bound of $\beta$ at which the approximation of the generalized Bessel function using Lemma~\ref{FrozenDunklKernelB} is valid can be estimated by noting that
\begin{eqnarray}
\fl\Bigg\{\Delta^{(x)}+\beta\Bigg[\sum_{i=1}^N\frac{2\nu+1}{2x_i}\frac{\partial}{\partial x_i}+\sum_{1\leq i\neq j\leq N}\frac{2x_i}{x_i^2-x_j^2}\frac{\partial}{\partial x_i}\Bigg]\Bigg\}\sum_{\rho\in W_{B}}V_B \rme^{\sqrt{\beta}\by\cdot\rho\bx}&&\nonumber\\
=\sum_{\rho\in W_{B}}\beta |\rho^{-1}\by|^2 V_B \rme^{\sqrt{\beta}\by\cdot\rho\bx}=\beta y^2 \sum_{\rho\in W_{B}}V_B \rme^{\sqrt{\beta}\by\cdot\rho\bx}.&&
\end{eqnarray}
This relation follows from \eref{intertwiningrelation} and \eref{KBEkRadial} applied to the root system of type $B$ (see \ref{generalreview}). Performing the same operation on \eref{vkinftyk} gives
\begin{eqnarray}
\fl\Bigg\{\Delta^{(x)}+\beta\Bigg[\sum_{i=1}^N\frac{2\nu+1}{2x_i}\frac{\partial}{\partial x_i}+\sum_{1\leq i\neq j\leq N}\frac{2x_i}{x_i^2-x_j^2}\frac{\partial}{\partial x_i}\Bigg]\Bigg\}2^NN!\rme^{y^2 x^2/2N(\nu+N-\frac{1}{2})}&&\nonumber\\
=\Big[1+\frac{N+x^2y^2/[N(N+\nu-1/2)]}{\beta N(N+\nu-1/2)}\Big]\beta y^22^NN!\rme^{y^2 x^2/2N(\nu+N-\frac{1}{2})}.&&
\end{eqnarray}
Consequently, \eref{vkinftyk} is a reasonable approximation of the generalized Bessel function if 
\begin{equation}
\frac{N+x^2y^2/[N(N+\nu-1/2)]}{\beta N(N+\nu-1/2)}\lesssim 10^{-1}.
\end{equation}
To ensure its validity, we only use this approximation when $xy< \sqrt{\beta}N(N+\nu-1/2)/4$. When this bound is kept, the following condition may be imposed,
\begin{equation}
\frac{N+x^2y^2/[N(N+\nu-1/2)]}{\beta N(N+\nu-1/2)}\leq \frac{1}{\beta(N+\nu-1/2)}+\frac{1}{16}\lesssim 10^{-1}.
\end{equation}
Then, we require
\begin{equation}
\beta\geq\frac{C}{N+\nu-1/2}\gtrsim \frac{10}{N+\nu-1/2}\label{ConditionBetaFreezing}
\end{equation}
for some large constant $C\gtrsim 10$.

{\it Proof of Theorem~\ref{freezingtypeb}:} recall that $\alpha=\nu-1/2$. Using the expression for $p(t,\by|\bx)$ given in \eref{TPDRadialB} along with \eref{vkinftyk} and Stirling's approximation \cite{feller}, we write
\begin{eqnarray}
\fl\log \Big[\beta^{N/2}p(t,\sqrt{\beta}\bv|\bx)\Big]&\approx& \frac{\beta}{2}\Bigg[(\alpha+1)\sum_{i=1}^N\log \frac{v_i^2}{t}+2\!\!\sum_{1\leq i<j \leq N}\log\frac{|v_j^2-v_i^2|}{t}\nonumber\\
&&-\frac{v^2}{t}+N(N+\alpha)-\sum_{i=1}^N i\log i \nonumber\\
&&-\sum_{i=1}^N(\alpha+i) \log (\alpha+i)\Bigg] +\frac{N}{2}\log \frac{\beta}{2}\nonumber\\
&&-\frac{x^2}{2t}-\frac{N}{2}\log t+N\log 2+\log N!+\frac{v^2x^2}{2t^2N(N+\alpha)}.\label{logtpd}
\end{eqnarray}
Here, it has been assumed that $xv<\sqrt{\beta}t^2N(N+\alpha)/4$. The leading-order terms can be written as $-(\beta/2)F(\bv/\sqrt t,\alpha+1/2,N)$, with $F(\bz,\alpha+1/2,N)$ given by \eref{functionFforb}. The next-order term is $\frac{N}{2}\log \frac{\beta}{2}$, and the $\beta$-independent terms are written on the last line. Note that the last term corresponds to the generalized Bessel function. Because it is independent of $\beta$, it will vanish in the freezing regime. 

For the following equations, we write 
\begin{equation}
E_\beta(\bx,\by)=\frac{1}{2^N N!}\sum_{\rho\in W_B}\rme^{\sqrt{\beta}\bx\cdot\by}.
\end{equation}
Because $\beta$ is large, we use the expression
\begin{equation}
E_\beta(x,y)=\exp\Big[\frac{x^2y^2}{2N(N+\alpha)}\Big]
\end{equation}
whenever $xy<\sqrt{\beta}N(N+\alpha)/4$, and we use the bound \eref{DunklKernelBound} otherwise.

The scaled particle distribution $f(t,\sqrt{\beta } \bv)(\beta t)^{N/2}$ starting from the initial distribution $\mu(\bx)$ is given at large $\beta$ by
\begin{eqnarray}
\fl f(t,\sqrt\beta \bv)\beta^{N/2}\ud\bv&\approx&\rme^{-\beta F(\bv/\sqrt t,\nu,N)/2}\frac{\beta^{N/2}2^NN!}{(2t)^{N/2}}\int_{C_B}\rme^{-\frac{x^2}{2t}}E_\beta\Big(\frac{x}{\sqrt{t}},\frac{v}{\sqrt{t}}\Big)\mu(\bx)\ud\bx\ud\bv.\label{almosttheorem1}
\end{eqnarray}
Let us define the expectation of a test function $h(\bu)$ over $f(t,\sqrt{\beta t} \bu)(\beta t)^{N/2}$ by
\begin{equation}\label{DynamicalExpectation}
\fl\langle h\rangle_t=\int_{C_B}h(\bu)\rme^{-\beta F(\bu,\nu,N)/2}\beta^{N/2}2^{N/2}N!\int_{C_B}\rme^{-\frac{\zeta^2}{2}}E_\beta(\zeta,u)t^{N/2}\mu(\sqrt{t}\bzeta)\ud\bzeta\ud\bu,
\end{equation}
where the substitutions $\bu=\bv/\sqrt{t}$ and $\bzeta=\bx/\sqrt{t}$ have been carried out. The first objective is to show how $\langle h\rangle_t$ converges to
\begin{equation}
\langle h\rangle=\int_{C_B}h(\bu)\rme^{-\beta F(\bu,\nu,N)/2}\beta^{N/2}2^{N/2}N!\ud\bu
\end{equation}
at large $t$. Let us denote the integral over $\bzeta$ in \eref{DynamicalExpectation} by
\begin{equation}
\mathcal{I}(\bu)=\int_{C_B}\rme^{-\frac{\zeta^2}{2}}E_\beta(\zeta,u)t^{N/2}\mu(\sqrt{t}\bzeta)\ud\bzeta.
\end{equation}
To evaluate this integral, we choose a small number $0<\epsilon\ll 1$ and divide the integral into two regions, one where $\zeta<\epsilon$ (denoted $\mathcal{I}_<(\bu)$) and another where $\zeta\geq\epsilon$ (denoted $\mathcal{I}_\geq(\bu)$). The order of magnitude of the outer integral can be estimated by using the bound \eref{DunklKernelBound} as follows. Let us ``borrow'' the factor $\rme^{-\beta u^2/2}$ from $\rme^{-\beta F(\bu,\nu,N)/2}$ so we can write
\begin{equation}
|\rme^{-\beta u^2/2}\mathcal{I}_\geq(\bu)|\leq \int_\epsilon^\infty\rme^{-(\zeta-\sqrt{\beta}u)^2/2}t^{N/2}\zeta^{N-1}\tilde{\mu}(\sqrt{t}\zeta)\ud\zeta,
\end{equation}
where $\tilde{\mu}(x)$ is the result of taking the angular integral of $\mu(\bx)$ over $C_B$, and the bound \eref{DunklKernelBound} was used to complete the square in the exponential. By hypothesis on $\mu(\bx)$, there exists a positive constant $M$ such that
\begin{equation}
\mu(\bx)\leq \frac{M}{x^{N+\eta}}
\end{equation}
with $\eta>0$ for large enough $x$. From this, and assuming that $\epsilon\sqrt{t}$ is large enough to use the tail asymptotics of $\mu(\bx)$, we have
\begin{equation}
|\rme^{-\beta u^2/2}\mathcal{I}_\geq(\bu)|\leq \frac{M}{t^{\eta/2}}\int_\epsilon^\infty\rme^{-(\zeta-\sqrt{\beta}u)^2/2}\frac{\ud\zeta}{\zeta^{\eta+1}}.
\end{equation}
Using integration by parts, we see that the integral on the rhs can be approximated by
\begin{equation}
\int_\epsilon^\infty\rme^{-(\zeta-\sqrt{\beta}u)^2/2}\frac{\ud\zeta}{\zeta^{\eta+1}}=\frac{\rme^{-(\epsilon-\sqrt{\beta}u)^2/2}}{\eta \epsilon^\eta}+O(\epsilon^{-\eta+1})
\end{equation}
for $\eta\neq 1$ and a logarithmic correction if $\eta=1$. This approximation is valid for $\eta>0$, even when $\eta$ is small, provided that $\epsilon$ is chosen small enough, i.e., $\epsilon^\eta$ must be kept much smaller than 1. With this, we obtain
\begin{equation}
|\rme^{-\beta u^2/2}\mathcal{I}_\geq(\bu)|\leq \frac{M\rme^{-(\epsilon-\sqrt{\beta}u)^2/2}}{\eta t^{\eta/2} \epsilon^\eta}+O(\epsilon^{-\eta+1})\approx \frac{M\rme^{\beta u^2/2}}{\eta t^{\eta/2} \epsilon^\eta}+O(\epsilon^{-\eta+1}),
\end{equation}
because $\epsilon/\sqrt{\beta}\ll \epsilon\ll 1$, and
\begin{equation}\label{TheoremOnePieceOne}
\rme^{-\beta u^2/2}\mathcal{I}_\geq(\bu)=\rme^{-\beta u^2/2}O(t^{-\eta/2}\epsilon^{-\eta}).
\end{equation}
The inner integral $\mathcal{I}_<(\bu)$ depends on the value of $u$. Assuming that $u\geq\sqrt{\beta}N(N+\alpha)/4\epsilon$, we can borrow the factor $\rme^{-\beta u^2/2}$ again to obtain the following estimation,
\begin{equation}
\fl |\rme^{-\beta u^2/2}\mathcal{I}_<(\bu)|\leq\int_0^\epsilon \rme^{-(\zeta-\sqrt{\beta}u)^2/2}t^{N/2}\zeta^{N-1}\tilde{\mu}(\sqrt{t}\zeta)\ud\zeta\approx \rme^{-\beta u^2/2} \Big(1-\frac{M}{\eta t^{\eta/2}\epsilon^\eta}\Big),
\end{equation}
meaning that
\begin{equation}\label{TheoremOnePieceTwo}
\rme^{-\beta u^2/2}\mathcal{I}_<(\bu)=\rme^{-\beta u^2/2}[1+O(t^{-\eta/2}\epsilon^{-\eta})].
\end{equation}
On the other hand, if $u<\sqrt{\beta}N(N+\alpha)/4\epsilon$, we can use Lemma~\ref{FrozenDunklKernelB} to write
\begin{equation}
\mathcal{I}_<(\bu)=\int_0^\epsilon \rme^{-\frac{\zeta^2}{2}\left(1-\frac{u^2}{N(N+\alpha)}\right)}t^{N/2}\zeta^{N-1}\tilde{\mu}(\sqrt{t}\zeta)\ud\zeta.
\end{equation}
We borrow the factor $\rme^{-\beta u^2/2}$ one last time and we define $\gamma=N(N+\alpha)$ to obtain the following approximation,
\begin{eqnarray}
\rme^{-\beta u^2/2}\mathcal{I}_<(\bu)&=&\int_0^\epsilon \rme^{-u^2(\beta-\zeta^2/\gamma)/2}\rme^{-\zeta^2/2}t^{N/2}\zeta^{N-1}\tilde{\mu}(\sqrt{t}\zeta)\ud\zeta\nonumber\\
&=&\rme^{-u^2(\beta-O(\epsilon^2)/\gamma)/2}\rme^{-O(\epsilon^2)/2}\int_0^\epsilon t^{N/2}\zeta^{N-1}\tilde{\mu}(\sqrt{t}\zeta)\ud\zeta\nonumber\\
&\approx&\rme^{-\beta u^2/2}[1+O(t^{-\eta/2}\epsilon^{-\eta})].\label{NiceInnerPart}
\end{eqnarray}
The equality of the second line follows from the mean value theorem, and in the approximation of the last line we have assumed that $\epsilon^2\ll \beta\gamma$. Using these expressions for $\mathcal{I}(\bu)$ we can write
\begin{eqnarray}
\langle h\rangle_t&=&\int_{C_B}h(\bu)\rme^{-\beta F(\bu,\nu,N)/2}\beta^{N/2}2^{N/2}N![\mathcal{I}_<(\bu)+\mathcal{I}_\geq(\bu)]\ud\bu\nonumber\\
&=&\int_{\substack{C_B, u:\cr u<\sqrt{\beta}\gamma/4\epsilon}}h(\bu)\rme^{-\beta F(\bu,\nu,N)/2}\beta^{N/2}2^{N/2}N!\ud\bu[1+O(t^{-\eta/2}\epsilon^{-\eta})]\nonumber\\
&&+\int_{\substack{C_B, u:\cr u\geq\sqrt{\beta}\gamma/4\epsilon}}h(\bu)\rme^{-\beta F(\bu,\nu,N)/2}\beta^{N/2}2^{N/2}N!\ud\bu[1+O(t^{-\eta/2}\epsilon^{-\eta})]\nonumber\\
&&+\int_{C_B}h(\bu)\rme^{-\beta F(\bu,\nu,N)/2}\beta^{N/2}2^{N/2}N!\mathcal{I}_\geq(\bu)\ud\bu\nonumber\\
&=&\langle h\rangle[1+O(t^{-\eta/2}\epsilon^{-\eta})].\label{DynamicalExpectationConverging}
\end{eqnarray}
The proof of the first statement is completed by choosing a relationship between $t$ and $\epsilon$. Because $\epsilon$ must be chosen so that $\epsilon\ll 1$ and $\epsilon\sqrt{t}\gg 1$ for long times $t$, we set $\epsilon\propto t^{-1/4}$ so that with growing $t$, $\epsilon$ becomes arbitrarily small while $\sqrt{t}\epsilon\propto t^{1/4}$ becomes arbitrarily large. This produces the correction $O(t^{-\eta/4})$ in Equation~\eref{approxbeta}. Note also that we have assumed that $\epsilon\ll\sqrt{\beta N(N+\alpha)}$, which gives the requirement $t> C^\prime/[\beta^2 N^2(N+\alpha)^2]$, where $C^\prime$ is a large positive constant (of minimum order $10^1$). Due to the use of Lemma~\ref{FrozenDunklKernelB}, the lower bound for $\beta$ is given by \eref{ConditionBetaFreezing}.

A great improvement for the error terms can be obtained if the initial distribution $\mu(\bx)$ has compact support. In that case, there is no tail and it is enough to set a small fixed value of $\epsilon$ and take a time $t$ large enough that $\sqrt{t}\epsilon$ is larger than the distance from the farthest point in the support of $\mu(\bx)$ to the origin. If we define the support of $\mu(\bx)$ by $\mathcal{S}_\mu$, then we define the distance to its farthest point by
\begin{equation}\label{MaximumRadius}
r_\mu=\sup_{\bx\in\mathcal{S}_\mu}|\bx|.
\end{equation}
Then, the relaxation time for a given $\epsilon$ is
\begin{equation}\label{RelaxationImprovementBeta}
t\gtrsim r_\mu^2/\epsilon^2,
\end{equation}
with a correction of order $O(\epsilon^2)/\beta\gamma$ to the main Gaussian, as given by the second line of \eref{NiceInnerPart}.

For the second statement, let us recall the expression for $f(t,\sqrt{\beta t} \bu)(\beta t)^{N/2}$,
\begin{equation}\label{DynamicalDistribution}
\fl f(t,\sqrt{\beta t} \bu)(\beta t)^{N/2}=\rme^{-\beta F(\bu,\nu,N)/2}\beta^{N/2}2^{N/2}N!\int_{C_B}\rme^{-\frac{\zeta^2}{2}}E_\beta(\zeta,u)t^{N/2}\mu(\sqrt{t}\bzeta)\ud\bzeta.
\end{equation}
We calculate the freezing limit of the expression outside the integral as follows. As shown in \ref{laguerreappendix}, the function $F(\bz,\nu,N)$ is convex for $\bz\in\RR^N\backslash\{\bzero\}$ (from Equation~\eref{HessianFPositiveDefinite}, it follows that its Hessian is non-negative). Its minima occur at the solutions of the set of $N$ equations
\begin{equation}\label{minconditionb}
z_i^2=\alpha+1+\sum_{\substack{j:j\neq i\cr j=1}}^N\frac{2z_i^2}{z_i^2-z_j^2},\quad i=1,\ldots,N.
\end{equation}
These equations do not change if the variables $\{z_i\}_{i=1}^N$ are permuted or if their sign is changed. This system of $N$ equations is solved by the roots $\{s_{i,\alpha}\}_{1\leq i\leq N}$ of the $N$th associated Laguerre polynomial of parameter $\alpha=\nu-1/2$ by setting $z_i^2=s_{i,\alpha}$ (see equations \eref{minimizationonr} to \eref{LaguerreDifferentialEquation}). In fact, due to \eref{partone}, \eref{parttwo} and \eref{partthree}, the minimum value of $F(\bz,\alpha+1/2,N)$ is zero. It follows that 
\begin{equation}
\lim_{\beta\to\infty}\exp\Bigg[-\frac{\beta}{2}F(\bu,\nu,N)\Bigg]\Bigg(\frac{\beta}{2}\Bigg)^{N/2}\propto \sum_{\rho\in W_B}\delta^{(N)}(\bu-\rho \bz_N)\label{Fapprox}
\end{equation}
with $\bu=\bv/\sqrt{t}$. Using \eref{Fapprox} to calculate the freezing limit of \eref{DynamicalDistribution}, where we can use Lemma~\ref{FrozenDunklKernelB} for all values of $\bu$, we obtain
\begin{eqnarray}
\fl\lim_{\beta\to\infty}f(t,\sqrt{\beta t} \bu)(\beta t)^{N/2}\ud\bu&\propto&\sum_{\rho\in W_B}\delta^{(N)}(\bu-\rho \bz_N)\int_{C_B}\rme^{-\frac{x^2}{2t}\left(1-\frac{u^2}{\gamma}\right)}\mu(\bx)\ud\bx\ud\bu.
\end{eqnarray}
We only need the value of the integral over $\bx$ when $\bu=\rho\bz_N$. In that case, $u^2=N(N+\alpha)=\gamma$ (see \eref{partone}), and the integral is equal to one. We find that equality holds because $f(t,\sqrt{\beta t} \bu)(\beta t)^{N/2}\ud\bu$ is normalized to $|W_B|=N!2^N$ when integrated over $\RR^N$. Setting $\bu=\bv/\sqrt{t}$ proves the second part of the statement.\qquad\qquad\qquad\qquad\ \ $\square$

\subsection{Proof of Theorem~\ref{freezingoneta}}

As in the proof of Theorem~\ref{freezingtypeb}, we present the proof of Theorem~\ref{freezingoneta} immediately after proving Lemmas~\ref{freezingvketa} and \ref{genbesselbgamma}.

\begin{lemma}\label{freezingvketa}
In the limit $\nu\to\infty$, we have
\begin{equation}\label{anotheretainfinity}
\lim_{\nu\to\infty}V_B \nu^{|\lambda|}m_\lambda[(\bx)^2]=\frac{(2\lambda)!}{\lambda!}V_A m_\lambda(\bu),
\end{equation}
with $\bu=(\bx)^2/(2\beta)$.
\end{lemma}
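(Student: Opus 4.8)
The plan is to read the result straight off the closed form of $V_B$ in Proposition~\ref{vktypeb} and compare it term by term with the type-$A$ expression \eref{VkTypeA}. First I would multiply \eref{vktypebequation} by $\nu^{|\lambda|}$. Since the sum runs only over the finitely many partitions $\tau$ with $|\tau|=|\lambda|$, the limit $\nu\to\infty$ may be taken inside the sum, so it suffices to track the $\nu$-dependence of each summand. The only factor carrying any dependence on $\nu$ is the generalized Pochhammer symbol $(\beta[\nu+N-1/2]/2+\frac{1}{2})_\tau^{(2/\beta)}$ in the denominator; the quantities $c_\tau(2/\beta)$, $c_\tau^\prime(2/\beta)$, $u_{\tau\lambda}(2/\beta)$, $(\beta N/2)_\tau^{(2/\beta)}$ and $\PP{\tau}{2/\beta}[(\bx)^2]$ are all independent of $\nu$.

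The technical heart of the argument is therefore the asymptotics of that Pochhammer symbol. Writing it out via \eref{GPS} at $\alpha=2/\beta$ gives a product over $1\leq i\leq l(\tau)$ of ratios $\Gamma(b_i+\tau_i)/\Gamma(b_i)$, where each base $b_i=\beta(\nu+N-i+1/2)/2+\frac{1}{2}$ grows linearly in $\nu$. Applying the standard estimate $\Gamma(z+a)/\Gamma(z)\sim z^a$ as $z\to\infty$ to each factor, every ratio behaves like $(\beta\nu/2)^{\tau_i}$, so the full symbol is asymptotic to $(\beta\nu/2)^{|\tau|}=(\beta\nu/2)^{|\lambda|}$. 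Consequently $\nu^{|\lambda|}/(\beta[\nu+N-1/2]/2+\frac{1}{2})_\tau^{(2/\beta)}\to(2/\beta)^{|\lambda|}$ as $\nu\to\infty$, uniformly over the finite index set.

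With this limit in hand the remainder is bookkeeping. Inserting the factor $(2/\beta)^{|\lambda|}$ collapses the overall prefactor $2^{-2|\lambda|}(2/\beta)^{|\lambda|}$ to $(2\beta)^{-|\lambda|}$, and using that the Jack polynomial $\PP{\tau}{2/\beta}$ is homogeneous of degree $|\tau|=|\lambda|$ I can write $\PP{\tau}{2/\beta}[(\bx)^2]=(2\beta)^{|\lambda|}\PP{\tau}{2/\beta}(\bu)$ with $\bu=(\bx)^2/(2\beta)$, which cancels the $(2\beta)^{-|\lambda|}$. The surviving expression is exactly $(2\lambda)!\,M(\lambda,N)\sum_{\tau}[c_\tau(2/\beta)/c_\tau^\prime(2/\beta)(\beta N/2)_\tau^{(2/\beta)}]\,u_{\tau\lambda}(2/\beta)\PP{\tau}{2/\beta}(\bu)$, which by \eref{VkTypeA} equals $\frac{(2\lambda)!}{\lambda!}V_A m_\lambda(\bu)$, as claimed. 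I do not expect a genuine obstacle beyond the Gamma-ratio asymptotic: because the $\tau$-sum is finite no uniformity argument is needed, and the entire computation amounts to tracking the powers of $2$, $\beta$ and $\nu$.
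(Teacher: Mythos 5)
Your proposal is correct and follows essentially the same route as the paper: isolate the generalized Pochhammer symbol as the only $\nu$-dependent factor in \eref{vktypebequation}, show that $\nu^{|\tau|}$ divided by it tends to $(2/\beta)^{|\tau|}$, and absorb the resulting constant into the Jack polynomial by homogeneity to recover \eref{VkTypeA}. The only cosmetic difference is that you invoke the Gamma-ratio asymptotic $\Gamma(z+a)/\Gamma(z)\sim z^a$, whereas the paper writes the symbol directly as the finite product $\prod_{(i,j)\in\tau}[\frac{\beta}{2}(\nu+1/2+N-i)+j-\frac{1}{2}]$ and takes the limit factor by factor — equivalent here since each $\tau_i$ is a nonnegative integer.
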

\begin{proof}
We use \eref{vktypebequation} and notice that we only need to calculate the following limit ($|\lambda|=|\tau|$),
\begin{equation}\label{NuOneTermLimit}
\fl\lim_{\nu\to\infty}\!\frac{(\frac{\beta}{2}[\nu+N-1/2]+\frac{1}{2})_\tau^{(2/\beta)}}{\nu^{|\tau|}}=\!\lim_{\nu\to\infty}\prod_{(i,j)\in\tau}\frac{\frac{\beta}{2}(\nu+1/2+N-i)+j-\frac{1}{2}}{\nu}=\Bigg(\frac{\beta}{2}\Bigg)^{|\tau|}.
\end{equation}
This yields
\begin{equation}
\fl\lim_{\nu\to\infty}V_B \nu^{|\lambda|}m_\lambda[(\bx)^2]=(2\lambda)!M(\lambda,N)\sum_{\substack{\tau:l(\tau)\leq N\cr |\tau|= |\lambda|}}\frac{c_\tau (2/\beta)}{c_\tau^\prime (2/\beta)}\frac{u_{\tau\lambda}(2/\beta)}{(\beta N/2)_\tau^{(2/\beta)}}\frac{\PP{\tau}{2/\beta}[(\bx)^2]}{(2\beta)^{|\tau|}},
\end{equation}
and comparing this expression with \eref{VkTypeA}, the proof is complete.
\end{proof}

For the following lemma, we use the generalized hypergeometric function
\begin{equation}\label{ZeroFZero}
\FZ{\alpha}(\bx,\by)=\sum_{n=0}^\infty\sum_{\substack{\tau:l(\tau)\leq N\cr |\tau|=n}}\frac{c_\tau (\alpha)}{c_\tau^\prime (\alpha)}\frac{\PP{\tau}{\alpha}(\bx)\PP{\tau}{\alpha}(\by)}{(N/\alpha)_\tau^{(\alpha)}}.\label{zeroFzero}
\end{equation}

\begin{lemma}\label{DunklKernelLimitNu}
The limit \eref{limitgamma} of the generalized Bessel function of type $B$ is given by
\begin{equation}
\lim_{\nu\to\infty}\sum_{\rho\in W_{B}}V_B \rme^{\sqrt{\nu}\by\cdot\rho\bx}=2^NN!\FZ{2/\beta}\Bigg(\frac{(\bx)^2}{\sqrt{2\beta}},\frac{(\by)^2}{\sqrt{2\beta}}\Bigg)=2^N\sum_{\rho\in S_N}V_A \rme^{\bu\cdot\rho\bv},\label{vkbtransitiontoa}
\end{equation}
with $\bu=(\bx)^2/\sqrt{2\beta}$ and $\bv=(\by)^2/\sqrt{2\beta}$ on the rhs.\label{genbesselbgamma}
\end{lemma}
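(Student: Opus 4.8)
The plan is to follow the template of the proof of Lemma~\ref{FrozenDunklKernelB}, replacing the freezing limit of $V_B$ by its $\nu\to\infty$ counterpart from Lemma~\ref{freezingvketa}. First I would expand the $W_B$-symmetrised exponential with the scaling $\sqrt{\nu}$ by means of \eref{exponentialexpansion}, which gives
\begin{equation}
\sum_{\rho\in W_{B}}V_B \rme^{\sqrt{\nu}\by\cdot\rho\bx}=2^NN!\sum_{\mu:l(\mu)\leq N}\frac{m_\mu[(\by)^2]}{(2\mu)!M(\mu,N)}V_B \nu^{|\mu|} m_\mu[(\bx)^2].
\end{equation}
Since $V_B$ acts only on the $\bx$-variables and $\nu^{|\mu|}$ is a scalar, every summand is precisely of the type controlled by Lemma~\ref{freezingvketa}.

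Next I would pass to the limit $\nu\to\infty$ termwise and insert \eref{anotheretainfinity}, namely $\lim_{\nu\to\infty}V_B \nu^{|\mu|} m_\mu[(\bx)^2]=\frac{(2\mu)!}{\mu!}V_A m_\mu(\bu)$ with $\bu=(\bx)^2/(2\beta)$. The factor $(2\mu)!$ cancels against the denominator, leaving
\begin{equation}
\lim_{\nu\to\infty}\sum_{\rho\in W_{B}}V_B \rme^{\sqrt{\nu}\by\cdot\rho\bx}=2^NN!\sum_{\mu:l(\mu)\leq N}\frac{m_\mu[(\by)^2]}{\mu!M(\mu,N)}V_A m_\mu(\bu),
\end{equation}
again with $\bu=(\bx)^2/(2\beta)$.

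It then remains to recognise the right-hand side as the type-$A$ generalised Bessel function. Using the type-$A$ analogue of \eref{exponentialexpansion}, $\sum_{\rho\in S_N}\rme^{\bu\cdot\rho\bv}=N!\sum_\mu m_\mu(\bu)m_\mu(\bv)/[\mu!M(\mu,N)]$, together with the homogeneity of degree $|\mu|$ of both $m_\mu$ and $V_A m_\mu$, I would rescale the asymmetric pair $((\bx)^2/(2\beta),(\by)^2)$ into the symmetric one $\bu=(\bx)^2/\sqrt{2\beta}$, $\bv=(\by)^2/\sqrt{2\beta}$; the compensating powers $(2\beta)^{\pm|\mu|/2}$ cancel termwise. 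This identifies the series with $2^N\sum_{\rho\in S_N}V_A\rme^{\bu\cdot\rho\bv}$, and the type-$A$ identity $\sum_{\rho\in S_N}V_A\rme^{\bu\cdot\rho\bv}=N!\FZ{2/\beta}(\bu,\bv)$ from \cite{andrauskatorimiyashita12} (definition \eref{ZeroFZero}) supplies the middle expression, completing the chain of equalities.

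The step I expect to be the main obstacle is the justification of the termwise passage to the limit, i.e.\ interchanging $\lim_{\nu\to\infty}$ with the infinite sum over partitions $\mu$. For this I would invoke the bound \eref{DunklKernelBound}, which holds uniformly in $\nu$ because $\mu_{\bx}$ is a probability measure supported in the convex hull of $W_B\bx$; it gives $|\sum_{\rho\in W_{B}}V_B \rme^{\sqrt{\nu}\by\cdot\rho\bx}|\leq 2^NN!\,\rme^{\sqrt{\nu}xy}$ and supplies the domination needed to exchange limit and sum. As an independent check of convergence, I would note that the required limit is exactly the confluence $\FF{2/\beta}\to\FZ{2/\beta}$: applying \eref{genbesseltypeb} with $\by\to\sqrt{\nu}\by$ turns the left-hand side into $2^NN!\,\FF{2/\beta}(b;(\bx)^2/2,\nu(\by)^2/2)$ with $b=\frac{\beta}{2}(\nu+N-1/2)+\frac{1}{2}\sim\frac{\beta\nu}{2}$, and since $(b)_\tau^{(2/\beta)}\sim b^{|\tau|}$ the rescaled second argument $\nu(\by)^2/2\sim b\,(\by)^2/\beta$ yields $\FZ{2/\beta}((\bx)^2/2,(\by)^2/\beta)$; the homogeneity of $\FZ{2/\beta}$ then recasts this in the symmetric form of the statement, with the multivariate series converging uniformly on compact sets so that the limit is legitimate.
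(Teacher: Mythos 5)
Your argument is correct and follows essentially the same route as the paper's proof: expand the symmetrised exponential via \eref{exponentialexpansion}, apply Lemma~\ref{freezingvketa} termwise, and identify the resulting series with the type-$A$ object by homogeneity (the paper resums the $\mu$-sum directly into Jack polynomials to exhibit $\FZ{2/\beta}$, which is equivalent to your identification through the type-$A$ expansion of $\sum_{\rho\in S_N}V_A\rme^{\bu\cdot\rho\bv}$). Your concern about interchanging $\lim_{\nu\to\infty}$ with the sum over partitions goes beyond the paper, which performs the exchange termwise without comment; note only that the uniform bound \eref{DunklKernelBound} on the whole sum is not by itself a termwise dominating sequence, whereas the monotonicity in $\nu$ of each factor $\nu/(\frac{\beta}{2}(\nu+1/2+N-i)+j-\frac12)$ in Proposition~\ref{vktypeb} does furnish one.
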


\begin{proof}
This is calculated directly from \eref{exponentialexpansion} and \eref{anotheretainfinity}. We have
\begin{eqnarray}
\fl\lim_{\nu\to\infty}\sum_{\rho\in W_{B}}V_B \rme^{\sqrt{\nu}\by\cdot\rho\bx}&=&\sum_{\substack{\mu:l(\mu)\leq N}}\frac{2^NN!}{(2\mu)!}\frac{m_{\mu}[(\by)^2]}{M(\mu,N)}\lim_{\nu\to\infty}V_B\nu^{|\mu|}m_{\mu}[(\bx)^2]\nonumber\\
&=&\sum_{\substack{\mu:l(\mu)\leq N}}\!\!\!\!\frac{2^NN!m_{\mu}[(\by)^2]}{2^{2|\mu|}}\!\!\!\!\sum_{\substack{\tau:l(\tau)\leq N\cr |\tau|= |\mu|}}\frac{c_\tau (2/\beta)}{c_\tau^\prime (2/\beta)}\frac{u_{\tau\mu}(2/\beta)}{(\beta N/2)_\tau^{(2/\beta)}}\frac{\PP{\tau}{2/\beta}[(\bx)^2]}{(\beta/2)^{|\tau|}}\nonumber\\
&=&2^NN!\sum_{\substack{\tau:l(\tau)\leq N}}\frac{c_\tau (2/\beta)}{c_\tau^\prime (2/\beta)}\frac{\PP{\tau}{2/\beta}[(\by)^2]\PP{\tau}{2/\beta}[(\bx)^2]}{2^{|\tau|}\beta^{|\tau|}(\beta N/2)_\tau^{(2/\beta)}},
\end{eqnarray}
and we absorb the factors $2^{|\tau|}$ and $\beta^{|\tau|}$ in the denominator into the Jack polynomials using their homogeneity, obtaining \eref{vkbtransitiontoa}.
\end{proof}

Note that in this case, the bound \eref{DunklKernelBound} becomes
\begin{equation}\label{DunklKernelBoundNu}
|V_B \rme^{\sqrt{\nu}\bx\cdot\by}|\leq \rme^{\sqrt{\nu}xy}.
\end{equation}
The result in Lemma~\ref{DunklKernelLimitNu} can be used as an approximation of the generalized Bessel function whenever $\nu$ is large enough to validate the approximation of \eref{NuOneTermLimit} given by\begin{equation}\label{ReasonForBoundOnNu}
\prod_{(i,j)\in\tau}\Bigg[\frac{\beta}{2}+\frac{\beta}{2\nu}\Bigg(N+\frac{1}{2}-i\Bigg)+\frac{1}{2\nu}+\frac{j-1}{\nu}\Bigg]\approx\Big(\frac{\beta}{2}\Big)^{|\tau|}.
\end{equation}
This is valid if $\beta N/\nu$ is small (say, smaller than $10^{-1}$), while assuming that the integer partition $\tau$ is a partition of an integer of order $N$. To guarantee that $|\tau|$ is of order $N$, we restrict $\bx$ and $\by$ to values for which \eref{DunklKernelBoundNu} is valid and far from equality. This restriction is discussed after \eref{StartDiscussionBounds} in the proof of Theorem~\ref{freezingoneta}.

{\it Proof of Theorem~\ref{freezingoneta}:} As in the proof of Theorem~\ref{freezingtypeb}, we consider \eref{TPDRadialB} for large values of $\nu$ and we write
\begin{eqnarray}
\fl p(t,\sqrt{\nu}\bv|\bx)\nu^{N/2}&\propto& \exp\Bigg\{-\frac{\nu}{2}\Bigg[\frac{v^2}{t}-\beta\sum_{i=1}^N\log \frac{v_i^2}{t}+\beta N(\log \beta-1)\Bigg]\Bigg\}\frac{\nu^{N/2}}{t^{N/2}}\nonumber\\
&\times&\prod_{1\leq i<j\leq N}\Big|\nu\frac{v_j^2-v_i^2}{t}\Big|^\beta\FZ{2/\beta}\Bigg(\frac{(\bx)^2}{t\sqrt{2\beta}},\frac{(\bv)^2}{t\sqrt{2\beta}}\Bigg)\rme^{-x^2/2t}.
\end{eqnarray}
The quantity in square parentheses is the function $\tilde{F}(\bu=\bv/\sqrt{t},\beta,N)$, given in \eref{functionFtilde}. Note that the term $\beta N(\log \beta -1)$ arises naturally from the dominant terms in $\log c_B$ when $\nu\gg N$ and that we have assumed that $\bx$ and $\bv$ are small enough to use Lemma~\ref{DunklKernelLimitNu}. 

We first focus on the relaxation to the steady state. Setting $E_\nu(\bx,\by)=(2^N N!)^{-1}\sum_{\rho\in W_B}\exp(\sqrt{\nu}\by\cdot\rho\bx)$, the scaled distribution in this case is given by
\begin{eqnarray}
f(t,\sqrt{\nu t}\bu)(\nu t)^{N/2}&\approx&\rme^{-\nu\tilde{F}(\bu,\beta,N)/2}\prod_{1\leq i<j\leq N}|\nu(u_j^2-u_i^2)|^\beta(2\nu)^{N/2}N!\nonumber\\
&&\times\int_{C_B}\rme^{-x^2/2t}E_\nu(\bx/\sqrt{t},\bu)\mu(\bx)\ud\bx.
\end{eqnarray}
Like in the proof of Theorem~\ref{freezingtypeb}, we consider the expectation
\begin{eqnarray}
\langle h\rangle_t^\prime&=&\int_{C_B}h(\bu)\rme^{-\nu\tilde{F}(\bu,\beta,N)/2}\prod_{1\leq i<j\leq N}|\nu(u_j^2-u_i^2)|^\beta(2\nu)^{N/2}N!\nonumber\\
&&\times\int_{C_B}\rme^{-\zeta^2/2}E_\nu(\bzeta,\bu)t^{N/2}\mu(\sqrt{t}\bzeta)\ud\bzeta\ud\bu,
\end{eqnarray}
and focus on its convergence to the steady-state expectation
\begin{equation}
\langle h\rangle^\prime=\int_{C_B}h(\bu)\rme^{-\nu\tilde{F}(\bu,\beta,N)/2}\prod_{1\leq i<j\leq N}|\nu(u_j^2-u_i^2)|^\beta(2\nu)^{N/2}N!\ud\bu.
\end{equation}
Following the same same procedure of the proof of Theorem~\ref{freezingtypeb}, we denote by $\mathcal{I}(\bu)$ the integral over $\bzeta$ and define a small number $0<\epsilon\ll 1$ to separate it into the regions $\zeta<\epsilon$ and  $\zeta\geq\epsilon$. The analogue of \eref{TheoremOnePieceOne} is obtained from the bound \eref{DunklKernelBoundNu}
\begin{equation}
\rme^{-\nu u^2/2}\mathcal{I}_\geq(\bu)=\rme^{-\nu u^2/2}O(t^{-\eta/2}\epsilon^{-\eta}),
\end{equation}
and the inner part of $\mathcal{I}(\bu)$, much like \eref{TheoremOnePieceTwo}, is approximated by 
\begin{equation}
\rme^{-\nu u^2/2}\mathcal{I}_<(\bu)=\rme^{-\nu u^2/2}[1+O(t^{-\eta/2}\epsilon^{-\eta})]\label{StartDiscussionBounds}
\end{equation}
for $u\geq\sqrt{\nu}\beta/4\epsilon$. The latter bound for $u$ is an estimation of the values at which we can use Lemma~\ref{DunklKernelLimitNu} as an approximation of $E_\nu(\bzeta,\bu)$ while keeping the bound \eref{DunklKernelBoundNu}. The reason for this is as follows. $\FZ{2/\beta}((\bx)^2/\sqrt{2\beta},(\by)^2/\sqrt{2\beta})$ is a positive and monotonic function for growing $x$ or $y$, by construction. Furthermore, the $n$th term in the expansion of $\FZ{2/\beta}((\bx)^2/\sqrt{2\beta},(\by)^2/\sqrt{2\beta})$, see \eref{ZeroFZero}, is of order $x^{2n}y^{2n}$, while the $n$th term in the expansion of the rhs of \eref{DunklKernelBoundNu} is of order $\sqrt{\nu}x^ny^n$. This means that one can always find $\bx$ and $\by$ large enough in magnitude so that $\FZ{2/\beta}((\bx)^2/\sqrt{2\beta},(\by)^2/\sqrt{2\beta})\geq\rme^{\sqrt{\nu} x y}$. The problem is that the minimum magnitude of $\bx$ and $\by$ required for this inequality to hold depends on the directions of the two vectors. There is one case, which follows from Equations~(2.8) and (3.2b) of \cite{bakerforrester97}, where it is easy to extract a condition for $\bx$ and $\by$, or in this case, $\bzeta$ and $\bu$,
\begin{equation}\label{ZeroFZeroSpecialProperty}
\FZ{2/\beta}\Bigg(\frac{(\bzeta)^2}{2},\frac{u^2\bone}{\beta}\Bigg)=\FZ{2/\beta}\Bigg(\frac{u^2(\bzeta)^2}{2\beta},\bone\Bigg)=\exp\Bigg(\frac{\zeta^2u^2}{2\beta}\Bigg)\geq\rme^{\sqrt{\nu} x y}.
\end{equation}
This implies $u\zeta\geq2\sqrt{\nu}\beta$. To ensure that the required bounds are satisfied, we will approximate $E_\nu(\bzeta,\bu)$ using Lemma~\ref{DunklKernelLimitNu} whenever $u\zeta<\sqrt{\nu}\beta/4$ and we will assume that the error involved in this approximation decreases with decreasing values of $u\zeta$.

If $u<\sqrt{\nu}\beta/4\epsilon$, then we obtain the following estimate for the inner part of $\mathcal{I}(\bu)$,
\begin{eqnarray}
\fl\rme^{-\nu u^2/2}\mathcal{I}_<(\bu)&=&\int_{\substack{C_B:\cr \zeta<\epsilon}} \rme^{-\nu u^2/2}\FZ{2/\beta}\Bigg(\frac{(\bzeta)^2}{2},\frac{(\bu)^2}{\beta}\Bigg)\rme^{-\zeta^2/2}\mu(\sqrt{t}\bzeta)\ud\bzeta\nonumber\\
\fl&=&\rme^{-\nu u^2/2}\FZ{2/\beta}\Bigg(\frac{O(\epsilon^2)}{2},\frac{(\bu)^2}{\beta}\Bigg)\rme^{-O(\epsilon^2)/2}\int_0^\epsilon t^{N/2}\zeta^{N-1}\tilde{\mu}(\sqrt{t}\zeta)\ud\zeta\nonumber\\
\fl&\approx&\rme^{-\nu u^2/2}[1+O(t^{-\eta/2}\epsilon^{-\eta})].
\end{eqnarray}
The last line is justified by imposing the condition that $\epsilon^2\ll \beta\nu$, so the Gaussian dominates over the hypergeometric function. Finally, the analogue of Equation~\eref{DynamicalExpectationConverging} becomes
\begin{equation}
\langle h\rangle_t^\prime=\langle h\rangle^\prime[1+O(t^{-\eta/2}\epsilon^{-\eta})].
\end{equation}
Here, we can set $\epsilon\propto t^{-1/4}$, which gives the restriction $t>C^\prime/\beta^2\nu^2$ with $C^\prime\gtrsim 10^1$. This is valid only when $\beta N/\nu$ is sufficiently small, as discussed after \eref{ReasonForBoundOnNu}, meaning that $\nu>C\beta N$, again with $C\gtrsim 10^1$. This completes the proof of the first statement. For distributions with compact support, the same argument from the proof of Theorem~\ref{freezingtypeb} gives an estimation of the relaxation time, which is $t\gtrsim r_\mu^2/\epsilon^2$ with $r_\mu$ given by Equation~\eref{MaximumRadius} and a correction to the main Gaussian of order $O(\epsilon^2)/\beta\nu$.

We now focus on the limit $\nu\to\infty$. We observe that $\tilde{F}(\bu,\beta,N)$ is a convex function,
\begin{equation}
\frac{\partial^2 \tilde{F}}{\partial u_j \partial u_i}=2\delta_{ij}\left[1+\frac{\beta}{u_i^2}\right].
\end{equation}
Its minima lie at
\begin{equation}
u_i=\pm\sqrt{\beta},\ i=1,\ldots,N,
\end{equation}
and its minimum value is zero. Then, for $\nu\gg N$ the following approximation holds,
\begin{equation}
\rme^{-\nu\tilde{F}(\bu,\beta,N)/2}\approx\prod_{i=1}^N\sum_{s_i=\pm1}\exp[- \nu(u_i-s_i\sqrt{\beta})^2].
\end{equation}
Consider now the integral 
\begin{eqnarray}
\mathcal{E}&=&\int_{\bar{C_B}}h(\bu)\prod_{i=1}^N\sum_{s_i=\pm1}\rme^{- \nu(u_i-s_i\sqrt{\beta})^2}\prod_{1\leq i<j\leq N}|\nu(u_j^2-u_i^2)|^\beta(2\nu)^{N/2}N!\nonumber\\
&&\times\int_{\bar{C_B}}\rme^{-x^2/2t}\FZ{2/\beta}\Bigg(\frac{(\bx)^2}{2t},\frac{(\bu)^2}{\beta}\Bigg)\mu(\bx)\ud\bx\ud\bu,
\end{eqnarray}
where $h(\bu)$ is a test function with polynomial growth at infinity and $\bar{C_B}$ is the closure of $C_B$. Define the following subset of $\bar{C_B}$, $\mathcal{D}_\epsilon=\{\by\in C_B: \sqrt{\beta}-\epsilon\leq y_1\leq\ldots\leq y_N\leq \sqrt{\beta}+\epsilon\}$, where $0<\epsilon\ll 1$. We use Lemma~\ref{DunklKernelLimitNu} because we will be taking the limit $\nu\to\infty$ shortly. At very large values of $\nu$, one has
\begin{eqnarray}
\int_{\bar{C_B}\backslash\mathcal{D}_\epsilon}h(\bu)\prod_{i=1}^N\sum_{s_i=\pm1}\rme^{- \nu(u_i-s_i\sqrt{\beta})^2}\prod_{1\leq i<j\leq N}|\nu(u_j^2-u_i^2)|^\beta(2\nu)^{N/2}N!\nonumber\\
\quad\times\int_{\bar{C_B}}\rme^{-x^2/2t}\FZ{2/\beta}\Bigg(\frac{(\bx)^2}{2t},\frac{(\bu)^2}{\beta}\Bigg)\mu(\bx)\ud\bx\ud\bu\nonumber\\
\quad\qquad=O[\rme^{-\nu\epsilon^2}],
\end{eqnarray}
because the Gaussian term dominates the integrand away from $(\bu)^2=\sqrt{\beta}\bone=\sqrt{\beta}(1,\ldots,1)$. Therefore, if $\epsilon$ is chosen small while keeping $\nu\epsilon^2$ very large, this part of the integral can be neglected. For this purpose, set $\epsilon\propto \nu^{-\omega}$ with $0<\omega<1/2$. Then, the integral over $\mathcal{D}_\epsilon$ is simplified using the mean value theorem,
\begin{eqnarray}
\int_{\mathcal{D}_\epsilon}h(\bu)\prod_{i=1}^N\sum_{s_i=\pm1}\rme^{- \nu(u_i-s_i\sqrt{\beta})^2}\prod_{1\leq i<j\leq N}|\nu(u_j^2-u_i^2)|^\beta(2\nu)^{N/2}N!\nonumber\\
\quad\quad\times\int_{\bar{C_B}}\rme^{-x^2/2t}\FZ{2/\beta}\Bigg(\frac{(\bx)^2}{2t},\frac{(\bu)^2}{\beta}\Bigg)\mu(\bx)\ud\bx\ud\bu\nonumber\\
\quad=(2\nu)^{N/2}N!\prod_{1\leq i<j\leq N}|\nu(u_{j*}^2-u_{i*}^2)|^\beta\int_{\mathcal{D}_\epsilon}h(\bu)\prod_{i=1}^N\sum_{s_i=\pm1}\rme^{- \nu(u_i-s_i\sqrt{\beta})^2}\nonumber\\
\quad\quad\times\int_{\bar{C_B}}\rme^{-x^2/2t}\FZ{2/\beta}\Bigg(\frac{(\bx)^2}{2t},\frac{(\bu)^2}{\beta}\Bigg)\mu(\bx)\ud\bx\ud\bu,
\end{eqnarray}
where $\bu_*\in\mathcal{D}_\epsilon$. Then, the components of $\bu_*$ have the property that
\begin{equation}
u_{i*}=\sqrt{\beta}+O(\epsilon),
\end{equation}
and consequently
\begin{equation}
u_{j*}^2-u_{i*}^2=\sqrt{\beta}O(\epsilon)+O(\epsilon^2)=O(\epsilon).
\end{equation}
Thus, the order of magnitude of the product of differences is given by
\begin{equation}
\prod_{1\leq i<j\leq N}|\nu(u_{j*}^2-u_{i*}^2)|^\beta=\prod_{1\leq i<j\leq N}|O(\nu\epsilon)|^\beta=O(\nu^{(1-\omega)\beta N(N-1)/2}).
\end{equation}
This means that as $\nu\to\infty$, the product of differences tends to infinity (as opposed to vanishing, as all variables $\{u_{i*}\}_{1\leq i\leq N}$ tend to a single value). Therefore, it makes sense to write 
\begin{eqnarray}
\fl\lim_{\nu\to\infty}\mathcal{E}\propto\int_{\bar{C_B}}h(\bu)\prod_{i=1}^N\sum_{s_i=\pm1}\delta(u_i-s_i\sqrt{\beta})\int_{\bar{C_B}}\rme^{-x^2/2t}\FZ{2/\beta}\Bigg(\frac{(\bx)^2}{2t},\frac{(\bu)^2}{\beta}\Bigg)\mu(\bx)\ud\bx\ud\bu\nonumber\\
\fl\qquad=\int_{\bar{C_B}}h(\bu)\prod_{i=1}^N\sum_{s_i=\pm1}\delta(u_i-s_i\sqrt{\beta})\ud\bu\int_{\bar{C_B}}\rme^{-x^2/2t}\FZ{2/\beta}\Bigg(\frac{(\bx)^2}{2t},\bone\Bigg)\mu(\bx)\ud\bx\nonumber\\
=h(\sqrt{\beta}\bone)\int_{\bar{C_B}}\rme^{-x^2/2t}\FZ{2/\beta}\Bigg(\frac{(\bx)^2}{2t},\bone\Bigg)\mu(\bx)\ud\bx.
\end{eqnarray}
Making use of Equation~\eref{ZeroFZeroSpecialProperty} with $\bzeta=\bx/\sqrt{t}$ and $u^2\bone/\beta$ replaced by $\bone$ finally gives
\begin{equation}
\lim_{\nu\to\infty}\mathcal{E}\propto h(\sqrt{\beta}\bone)\int_{\bar{C_B}}\rme^{-x^2/2t}\rme^{x^2/2t}\mu(\bx)\ud\bx=h(\sqrt{\beta}\bone),
\end{equation}
or, in the sense of distributions,
\begin{equation}
\lim_{\nu\to\infty}f(t,\sqrt{\nu t}\bu)(\nu t)^{N/2}\ud\bu\propto\delta^{(N)}(\bu-\sqrt{\beta}\bone)\ud\bu.
\end{equation}
Because only the closure of $\bar{C_B}$ was considered, the only delta function that survived in the proof was the one located at $\sqrt{\beta}\bone$. However, if we extend this analysis to $\RR^N$, we see that
\begin{equation}
\lim_{\nu\to\infty}f(t,\sqrt{\nu t}\bu)(\nu t)^{N/2}\ud\bu=N!\prod_{i=1}^N\sum_{s_i=\pm1}\delta(u_i-s_i\sqrt{\beta})\ud\bu,
\end{equation}
where the proportionality constant is $N!$ because both members of the expression are normalized to $2^N N!$ in $\RR^N$.\qquad\qquad\qquad\qquad\qquad\qquad\qquad\qquad\qquad\qquad\qquad\qquad$\square$

\section{Concluding Remarks}\label{conclusions}

We obtained the time-scaled steady-state distribution of the interacting Bessel processes for the cases where $\beta$ or $\nu$ are large but finite. In particular, we showed that when $\beta$ is large but finite, the particle distribution in the steady state corresponds directly to the eigenvalue distribution of the $\beta$-Laguerre ensembles. We also obtained an estimation for the relaxation time in each of these cases. The asymptotic forms of the generalized Bessel function in both regimes played an important role in these estimations, and in turn, these asymptotic forms were obtained from the intertwining operator associated with the interacting Bessel process, $V_B$. We obtained an expression for $V_B$ when it operates on symmetric polynomials of squared variables.  We found that in the regimes \eref{limitbeta} and \eref{limitgamma} the behaviour of $V_B$ is almost identical to the behaviour of the intertwining operator for the interacting Brownian motions, $V_A$. This is consistent with the results for the generalized Bessel function obtained by R\"osler and Voit \cite{roslervoit08}. It seems natural to expect a similar limit behaviour in the non-symmetric case. However, there are no explicit expressions for $V_A$ or $V_B$ when they operate on non-symmetric polynomials, meaning that both the form and the limit behaviour of each of these intertwining operators remain as open problems.

We calculated the TPD of the interacting Bessel processes in the freezing regime. We found that the scaled trajectory of the particles in this regime is dependent on the square root of the Laguerre zeroes, in the same way that the freezing limit of the interacting Brownian motions depends on the Hermite zeroes. As shown in \cite{taobook12}, the Hermite zeroes are the solution of a weighted log-Fekete problem in one dimension \cite{fekete1923,deift00}. The roots of the Laguerre zeroes are also the solution to a log-Fekete problem with a different weight. Dunkl \cite{dunkl89B} studied the same extremum problem from a slightly different perspective, focusing on finding what he calls the peak set of a reflection group $W$, which is the set of unit vectors that maximizes the weight function \eref{weightk} (see \ref{generalreview}), or \eref{wkb} in the case of the root system of type $B$. This peak set and the solutions to the log-Fekete problem are closely related to the freezing limits of stochastic processes that can be expressed as radial Dunkl processes. We are currently examining this relationship in detail. In addition, from the results in this paper and in \cite{andrauskatorimiyashita12}, it is clear that the freezing limits of the interacting Brownian motions and the interacting Bessel processes are independent of the form of $V_A$ and $V_B$. This fact suggests that, in general, the form of the intertwining operator is unimportant in the freezing regime. We plan to investigate this matter in the near future.

In addition, we used numerical simulations to confirm our analytical results. We found that the particle density curves we obtained for each regime are in good agreement with the results in Theorems~\ref{freezingtypeb} and \ref{freezingoneta}; they are also in good agreement with the known exact results for $\beta=2$. While this is not surprising for the regime \eref{limitgamma}, it is somewhat surprising to see that the steady-state distribution \eref{approxbeta} fits well with the case $\beta=2$. Perhaps this means that we can obtain similar results for arbitrary values of $\beta$ or that the threshold for which $\beta$ can be considered ``large'' may be smaller than we expected. We also found that our estimations for the relaxation time in both cases are very conservative, and it seems that these estimations may have room for improvement. In particular, considering other kinds of asymptotics for the initial distribution $\mu(\bx)$ may lead to smaller lower bounds for the relaxation time.

It is known that the freezing ($\beta\to\infty$) behaviour of the interacting Brownian motions and the interacting Bessel processes is similar to the freezing trick of the Calogero-Moser (CM) systems of type $A$ and $B$, respectively \cite{polychronakos93,frahm93,yamamototsuchiya96}. It is also known \cite{forrester10,bakerforrester97,rosler08} that CM systems can be mapped to the interacting Brownian motions and other multivariate stochastic processes. In \ref{DPCMS}, we reformulate the CM systems as non-equilibrium statistical mechanics models by transforming Dunkl processes into CM systems evolving in imaginary time. For this purpose, we propose a transformation which we call the diffusion-scaling transformation. This reformulation implies that the freezing limits of these two types of systems are equivalent for all types of Dunkl processes. The freezing limit of Calogero-Moser-Sutherland (CMS, or circular CM) systems gives rise to spin chains of Haldane-Shastry type \cite{haldane88,shastry88}. This prompts the possibility of constructing a mapping from Dunkl processes defined on the unit circle to CMS systems similar to the diffusion-scaling transformation. At first glance, we expect the mapping to be fairly straightforward, because both systems are defined on a space of finite size. Hence, no diffusion scaling would be needed and only a similarity transformation should be required (perhaps a modified version of the Heckman-Opdam process is the stochastic counterpart to the CMS system \cite{schapira07}). The elliptic extension of Dyson's Brownian motion model for $\beta=2$ defined on the unit circle which was introduced in \cite{katori13} may provide hints in this direction. This is a problem that we leave open for future study. 

%%%%%%%%%%%%%%%%%%%%%%%%%%%%%%%%%%%%%%%%%%%%%%%%
%\begin{acknowledgments}
\ack{
%%%%%%%%%%%%%%%%%%%%%%%%%%%%%%%%%%%%%%%%%%%%%%%%
The authors would like to thank the referees for their comments and suggestions, which led to the improvement of this paper. SA would like to thank T. Kimura for helpful discussions on the CM systems, N. Demni for many stimulating discussions, A. Hardy for his comments on the Fekete sets and C. F. Dunkl for helpful comments on the peak sets of reflection groups. SA would also like to thank P. Graczyk and the organizing committee of the conference ``Harmonic Analysis and Probability'' in Angers, France (September 2-8, 2012), where part of this work was carried out. 
SA is supported by the Monbukagakusho: MEXT scholarship for research students.
MK is supported by
the Grant-in-Aid for Scientific Research (C)
(Grant No.21540397) from the Japan Society for
the Promotion of Science.
}
%\end{acknowledgments}
%%%%%%%%%%%%%%%%%%%%%%%%%%%%%%%%%%%%%%%%%%%%

\appendix

\section{Root systems and Dunkl processes}\label{generalreview}

\setcounter{section}{1}

We list the definitions surrounding Dunkl processes and their radial parts. For a more complete explanation, see e.g. \cite{roslervoit98, rosler08, dunklxu}.

The reflection operator is defined by
\begin{equation}\label{reflection}
\sigma_{\balpha}\bx=\bx-2\frac{\bx\cdot \balpha}{\alpha^2}\balpha.
\end{equation}
This operator acts on the vector $\bx$ by reflecting it through the hyperplane defined by the vector $\balpha$ in $N$ dimensions ($\balpha,\bx\in\RR^N$). A root system is defined as a finite set of vectors such that when its elements (called \emph{roots}) are reflected by any root, the resulting vector also belongs to the set. In other words, a root system $R$ is defined by the property that $\sigma_{\balpha}\bxi\in R$ for any $\balpha,\bxi\in R$. We denote the $i$th canonical base vector by $\be_i$. The root system of type $B$ is given by
\begin{equation}\label{RootSystemTypeB}
B_N=\{\pm \be_i : 1\leq i\leq N\}\cup\{\pm(\be_i-\be_j),\pm(\be_i+\be_j):1\leq j<i\leq N\}.
\end{equation}

For every root system $R$ there is a reflection group $W$ whose elements are the reflections along its roots and all their possible combinations. The reflections along the roots of $B_N$ are as follows:
\begin{eqnarray}
\sigma_{\pm\be_i}\bx&=&(x_1,\ldots,x_{i-1},-x_i,x_{i+1},\ldots,x_N),\nonumber\\
\sigma_{\pm(\be_i-\be_j)}\bx&=&(x_1,\ldots,x_{i-1},x_j,x_{i+1},\ldots,x_{j-1},x_i,x_{j+1},\ldots,x_N),\nonumber\\
\sigma_{\pm(\be_i+\be_j)}\bx&=&(x_1,\ldots,x_{i-1},-x_j,x_{i+1},\ldots,x_{j-1},-x_i,x_{j+1},\ldots,x_N).\label{ReflectionsTypeB}
\end{eqnarray}
Therefore, the reflection group $W_B$ contains all the permutations and sign changes that can be applied to a vector in $\RR^N$.

There is a set of parameters $k(\balpha)$, called multiplicities, that are assigned to every root system using the following rule: two roots $\balpha$ and $\bzeta$ must be associated with the same parameter, $k(\balpha)=k(\bzeta)$, if there exists an element $\rho$ of $W$ such that $\balpha=\rho\bzeta$ (i.e. they belong to the same orbit). There are two multiplicities associated with $B_N$ because the roots $\{\pm\be_i\}_{i=1}^N$ and the roots $\{\pm\be_i\pm\be_j\}_{1\leq i\neq j\leq N}$ belong to different orbits. We choose these multiplicities to be $k_1=\beta(\nu+1/2)/2$ and $k_2=\beta/2$.

Let us choose an arbitrary vector $\bmm\in\spn(R)$ such that $\balpha\cdot\bmm\neq0$ for all $\balpha\in R$. We can divide $R$ into two parts as follows: $R_+=\{\balpha\in R : \balpha\cdot\bmm>0\}$ and $R_-=\{\balpha\in R : \balpha\cdot\bmm<0\}$. Clearly, $R=R_+\cup R_-$, and both parts contain the same number of elements. These two parts are called the positive subsystem $R_+$ and the negative subsystem $R_-$. In our case, we choose the positive subsystem
\begin{equation}
B_{N,+}=\{ \be_i : 1\leq i\leq N\}\cup\{\be_i-\be_j,\be_i+\be_j:1\leq j<i\leq N\},
\end{equation}
generated, for instance, by the vector $\bmm=(1,2,\ldots,N-1,N)$. It is useful to define the sum of multiplicities over the positive subsystem as
\begin{equation}
\gamma_R=\sum_{\balpha\in R_+}k(\balpha)=\frac{1}{2}\sum_{\balpha\in R}k(\balpha).\label{gammasumk}
\end{equation}
Note that this value does not depend on the choice of $R_+$. We also introduce the following weight function:
\begin{equation}
w_k(\bx)=\prod_{\balpha\in R}|\balpha\cdot\bx|^{k(\balpha)}.\label{weightk}
\end{equation}
For the root system of type $B$, this function is given by \eref{wkb}.

For an arbitrary root system $R$, the Dunkl operator in the direction $\be_i$ is given by \cite{dunkl89, dunklxu, dunkl08}
\begin{equation}
T_i f(\bx)=\frac{\partial}{\partial x_i} f(\bx)+\sum_{\balpha \in R_+}k(\balpha)\frac{f(\bx)-f(\sigma_{\balpha} \bx)}{\balpha\cdot\bx} \alpha_i.
\end{equation}
Dunkl's intertwining operator, $V_k$, is a linear operator that is defined by the relationship
\begin{equation}
T_i V_k f(\bx)=V_k \frac{\partial}{\partial x_i} f(\bx)\label{intertwiningrelation}
\end{equation}
for any differentiable function $f(\bx)$. It is also defined so that, when it is applied on any polynomial function, the resulting polynomial has the same total degree. Also, it is normalized so that $V_k 1=1$. The intertwining operator is very powerful and useful for many calculations. However, its general explicit form is still unknown, though some progress in this direction has been achieved by Maslouhi and Youssfi \cite{maslouhiyoussfi09}. Note that $V_k$ commutes with partial derivatives with respect to time. If we denote the TPD of a Brownian motion by $p_{\textrm{B}}(t,\by|\bx)$ and we restrict $V_k$ to operate on $\bx$, we immediately have
\begin{equation}
\fl\frac{\partial}{\partial t}V_kp_{\textrm{B}}(t,\by|\bx)=V_k\frac{\partial}{\partial t}p_{\textrm{B}}(t,\by|\bx)=V_k \frac{1}{2}\Delta^{(x)}p_{\textrm{B}}(t,\by|\bx)=\frac{1}{2}\sum_{i=1}^N T_i^2 V_kp_{\textrm{B}}(t,\by|\bx).\label{dunklheatsymbolic}
\end{equation}
This means that $V_kp_{\textrm{B}}(t,\by|\bx)$ solves the Dunkl version of the diffusion equation, and $\sum_{i=1}^N T_i^2$ is called the Dunkl Laplacian. We define Dunkl processes by regarding \eref{dunklheatsymbolic} as a KBE: a process that obeys \eref{dunklheatsymbolic} as its KBE is a Dunkl process. If we denote its TPD by $P_k(t,\by|\bx)$ the KBE can be written explicitly as follows:
\begin{eqnarray}
\frac{\partial}{\partial t}P_k(t,\by|\bx)=&&\frac{1}{2}\Delta^{(x)} P_k(t,\by|\bx)+\sum_{\balpha\in R_+}k(\balpha)\frac{\balpha\cdot\bnabla^{(x)} P_k(t,\by|\bx)}{\balpha\cdot\bx}\nonumber\\
&&-\sum_{\balpha\in R_+}k(\balpha)\frac{\alpha^2}{2}\frac{P_k(t,\by|\bx)-P_k(t,\by|\sigma_{\balpha}\bx)}{(\balpha\cdot\bx)^2}.\label{dunklheat}
\end{eqnarray}
For completeness, we list the corresponding Kolmogorov forward equation (KFE): 
\begin{eqnarray}
\frac{\partial}{\partial t}P_k(t,\by|\bx)=&&\frac{1}{2}\Delta^{(y)} P_k(t,\by|\bx)-\sum_{\balpha\in R_+}k(\balpha)\frac{\balpha\cdot\bnabla^{(y)} P_k(t,\by|\bx)}{\balpha\cdot\by}\nonumber\\
&&+\sum_{\balpha\in R_+}k(\balpha)\frac{\alpha^2}{2}\frac{P_k(t,\by|\bx)+P_k(t,\sigma_{\balpha}\by|\bx)}{(\balpha\cdot\by)^2}.\label{dunklforward}
\end{eqnarray}
Radial Dunkl processes are obtained from requiring the TPD $P_k(t,\by|\bx)$ to be symmetric with respect to the reflection group $W$. More precisely, imposing the initial condition 
\begin{equation}
\mu_{\bx}(\bu)=\sum_{\rho\in W}\delta^{(N)}(\bu-\rho\bx),
\end{equation}
yields the TPD
\begin{equation}
p_k(t,\by|\bx)=\int_{\RR^N}P_k(t,\by|\bu)\mu_{\bx}(\bu)\ud\bu=\sum_{\rho\in W}P_k(t,\by|\rho\bx).\label{radialinitialcondition}
\end{equation}
Therefore, the KBE of a radial Dunkl process is given by
\begin{equation}
\frac{\partial}{\partial t}p_k(t,\by|\bx)=\frac{1}{2}\Delta^{(x)} p_k(t,\by|\bx)+\sum_{\balpha\in R_+}k(\balpha)\frac{\balpha\cdot\bnabla^{(x)} p_k(t,\by|\bx)}{\balpha\cdot\bx}.\label{KBEkRadial}
\end{equation}

For the case of the root system of type $B$, \eref{KBEkRadial} is given by \eref{typebradialdunklkbe}. Also, $p_k(t,\by|\bx)$ for the same root system is given by \eref{TPDRadialB}. For details on the root system associated with the interacting Brownian motions, $A_{N-1}$, see Appendix~A of \cite{andrauskatorimiyashita12}.

\section{Proof of Proposition~\ref{vktypeb}}\label{propositionVBproof}

We will use a slightly modified version of the proof of Theorem 2 in \cite{andrauskatorimiyashita12}, keeping in mind that $V_B$ acts only on either $\bx$ or $\by$ (we choose $\bx$ here). We begin by expanding $\sum_{\rho\in W_B}\exp(\bx\cdot\rho\by)$ in terms of symmetric polynomials. All the elements of $W_B$ can be written as compositions of variable permutations and sign changes. Then, we write
\begin{equation}
\sum_{\sigma\in W_B}\rme^{\bx\cdot\sigma\by}=\sum_{\rho\in S_N}\sum_{\substack{\mu:l(\mu)\leq N}}\frac{1}{\mu!}\sum_{\substack{\tau\in S_N:\cr \tau(\mu)\ \textrm{distinct}}}\prod_{j=1}^N\sum_{s_j=\pm 1}s_j^{\mu_{\tau(j)}}(y_{\rho(j)} x_j)^{\mu_{\tau(j)}}.
\end{equation}
The product over $j$ vanishes when at least one of the parts of $\mu$ is odd, so we only consider partitions with even parts. Then we have
\begin{eqnarray}
\sum_{\sigma\in W_B}\rme^{\bx\cdot\sigma\by}&=&\sum_{\substack{\mu:l(\mu)\leq N}}\frac{2^N}{(2\mu)!}\sum_{\substack{\tau\in S_N:\cr \tau(\mu)\ \textrm{distinct}}}\left\{\sum_{\rho\in S_N}\prod_{j=1}^N(y_{\rho(j)})^{2\mu_{\tau(j)}}\right\} \prod_{j=1}^Nx_j^{2\mu_{\tau(j)}}\nonumber\\
&=&\sum_{\substack{\mu:l(\mu)\leq N}}\frac{2^NN!}{(2\mu)!}\frac{m_{\mu}[(\bx)^2]m_{\mu}[(\by)^2]}{M(\mu,N)}.\label{exponentialexpansion}
\end{eqnarray}
Applying $V_B$ on this result and inserting into \eref{genbesseltypeb} yields
\begin{equation}
\fl V_B\sum_{\substack{\mu:l(\mu)\leq N}}\frac{m_{\mu}[(\bx)^2]m_{\mu}[(\by)^2]}{(2\mu)!M(\mu,N)}=\FF{2/\beta}\left(\frac{\beta}{2}(\nu+N-1/2) +\frac{1}{2};\frac{(\bx)^2}{2},\frac{(\by)^2}{2}\right).
\end{equation}
From \eref{genhypergeob} and the fact that Jack polynomials are homogeneous ($\PP{\tau}{\alpha}(c\bx)=\PP{\tau}{\alpha}(\bx)c^{|\tau|}$), we obtain
\begin{equation}
\fl V_B\!\!\!\!\sum_{\substack{\mu:l(\mu)\leq N}}\!\!\!\frac{m_{\mu}[(\bx)^2]m_{\mu}[(\by)^2]}{(2\mu)!M(\mu,N)}=\!\!\!\!\sum_{\tau:l(\tau)\leq N}\frac{c_\tau (2/\beta)}{c_\tau^\prime (2/\beta)}\frac{\PP{\tau}{2/\beta}[(\bx)^2]\PP{\tau}{2/\beta}[(\by)^2]}{2^{2|\tau|}(\frac{\beta}{2}[\nu+N-1/2]+\frac{1}{2})_\tau^{(2/\beta)}(N\frac{\beta}{2})_\tau^{(2/\beta)}}.
\end{equation}
Next, we use the inverse of the expansion of Jack polynomials into monomial symmetric polynomials, \eref{JackP}, on the lhs:
\begin{equation}
\fl \sum_{\substack{\mu:l(\mu)\leq N}}\!\!\frac{V_Bm_{\mu}[(\bx)^2]}{(2\mu)!M(\mu,N)}m_{\mu}[(\by)^2]=\!\!\!\!\sum_{\substack{\mu:l(\mu)\leq N}}\!\!\frac{V_Bm_{\mu}[(\bx)^2]}{(2\mu)!M(\mu,N)}\sum_{\substack{\rho:\rho\leq\mu\cr |\rho|=|\mu|}}(u^{-1})_{\mu\rho}(2/\beta)\PP{\rho}{2/\beta}[(\by)^2].
\end{equation}
Because Jack polynomials are orthogonal, we equate the coefficients of $\PP{\tau}{2/\beta}[(\by)^2]$,
\begin{equation}
\fl \sum_{\substack{\mu:l(\mu)\leq N\cr |\mu|=|\tau|}}\!\!\frac{V_Bm_{\mu}[(\bx)^2]}{(2\mu)!M(\mu,N)}(u^{-1})_{\mu\tau}(2/\beta)=\frac{c_\tau (2/\beta)}{c_\tau^\prime (2/\beta)}\frac{\PP{\tau}{2/\beta}[(\bx)^2/4]}{(\frac{\beta}{2}[\nu+N-1/2]+\frac{1}{2})_\tau^{(2/\beta)}(N\frac{\beta}{2})_\tau^{(2/\beta)}}.
\end{equation}
Multiplying by $\sum_\tau u_{\tau\lambda}(2/\beta)$ on both sides completes the proof.

\section{The Roots of the Laguerre Polynomials}\label{laguerreappendix}

In this appendix, we prove that the solutions of \eref{minconditionb} are located at the square root of the zeroes of the $N$th Laguerre polynomial, and we prove that the minimum value of $F(\bz,\nu,N)$, given by \eref{functionFforb}, is zero. First, we consider the second derivatives of $F(\bz,\nu,N)$:
\begin{equation}
\fl\frac{\partial^2}{\partial z_j \partial z_i}F(\bz,\nu,N)=2\delta_{ij}\Big(1+\frac{2\nu+1}{2z_i^2}\Big)+4\Big[\delta_{ij}\sum_{\substack{l:l\neq i\cr l=1}}^N\frac{z_i^2+z_l^2}{(z_i^2-z_l^2)^2}-(1-\delta_{ij})\frac{2z_iz_j}{(z_i^2-z_j^2)^2}\Big].
\end{equation}
If we consider an arbitrary vector $\bu\in\RR^N$, we see that
\begin{eqnarray}
\fl\sum_{1\leq i,j\leq N}\!\!\!\!u_iu_j\frac{\partial^2}{\partial z_j \partial z_i}F(\bz,\nu,N)&=&2\sum_{i=1}^Nu_i^2\Big(1+\frac{2\nu+1}{2z_i^2}\Big)\nonumber\\
&&+2\!\!\!\!\sum_{1\leq i\neq j\leq N}\!\!\!\!\frac{(u_iz_i-u_jz_j)^2+(u_iz_j-u_jz_i)^2}{(z_i^2-z_j^2)^2}\geq 0.\label{HessianFPositiveDefinite}
\end{eqnarray}
Therefore, all extrema are minima. Now, we substitute $\bor=(\bz)^2$ in \eref{minconditionb} to obtain
\begin{equation}\label{minimizationonr}
r_i=\nu+1/2+\sum_{\substack{j:j\neq i\cr j=1}}^N\frac{2r_i}{r_i-r_j},\quad i=1,\ldots,N.
\end{equation}
We construct the following polynomial,
\begin{equation}
p_N(x)=C_N\prod_{j=1}^N(x-r_j),
\end{equation}
where $C_N$ is an arbitrary real nonzero constant, and we denote by $p_N^\prime(x)$ and $p_N^{\prime\prime}(x)$ its first and second derivatives, respectively. Evaluating them at $x=r_i$, they become
\begin{eqnarray}
p_N^\prime(r_i)&=&C_N\prod_{\substack{n:n\neq i\cr n=1}}^N(r_i-r_n)\ \textrm{and}\\
p_N^{\prime\prime}(r_i)&=&2C_N\sum_{\substack{j:j\neq i\cr j=1}}^N\prod_{\substack{n:n\neq i,j\cr n=1}}^N(r_i-r_n).
\end{eqnarray}
Multiplying \eref{minimizationonr} by $p_N^\prime(r_i)$ yields
\begin{equation}
r_ip_N^{\prime\prime}(r_i)+(\nu+1/2 -r_i)p_N^\prime(r_i)=0,\quad i=1,\ldots,N.
\end{equation}
Comparing the above with the equation obeyed by the Laguerre polynomials \cite{szego},
\begin{equation}\label{LaguerreDifferentialEquation}
xL_N^{(\alpha)\prime\prime}(x)+(\alpha+1-x)L_N^{(\alpha)\prime}(x)+NL_N^{(\alpha)}(x)=0,
\end{equation}
we see that $p_N(x)$ must be proportional to $L_N^{(\nu-1/2)}(x)$, and the $\{r_i\}_{i=1}^N$ must be the roots of $L_N^{(\nu-1/2)}(x)$, $\{s_{i,\nu-1/2}\}_{i=1}^N$. This, in turn, means that the minima of $F(\bz,\alpha+1/2,N)$ lie at $\bz=(\sqrt{s_{1,\alpha}},\ldots,\sqrt{s_{N,\alpha}})$.

We proceed to calculate the minimum value of $F(\bz,\alpha+1/2,N)$, given by
\begin{eqnarray}
\fl \min_{\bz\in\RR^N} F(\bz,\alpha+1/2,N)&=&\sum_{i=1}^Ns_{i,\alpha}-(\alpha+1)\sum_{i=1}^N\log s_{i,\alpha}-2\sum_{1\leq i<j\leq N}\log|s_{j,\alpha}-s_{i,\alpha}| \nonumber\\
&&+\sum_{i=1}^N i\log i+\sum_{i=1}^N(\alpha+i)\log(\alpha+i)-N(\alpha+N).\label{minimumvalue}
\end{eqnarray}
The first term is obtained from the sum over $i$ of \eref{minimizationonr},
\begin{equation}
\sum_{i=1}^Ns_{i,\alpha}=N(\alpha+N).\label{partone}
\end{equation}
The second term can be calculated from
\begin{equation}
\sum_{i=1}^N\log s_{i,\alpha}=\log(N!L_N^{(\alpha)}(0)),
\end{equation}
where $L_N^{(\alpha)}(0)=\frac{1}{N!}\prod_{i=1}^N(\alpha+i)$ \cite{szego}, so we obtain
\begin{equation}
\sum_{i=1}^N\log s_{i,\alpha}=\sum_{i=1}^N\log(\alpha+i).\label{parttwo}
\end{equation}
Finally, following \cite{szego} and in a manner similar to \cite{andrauskatorimiyashita12}, we calculate the third term. To avoid confusions, we will denote the roots of $L_{N}^{(\alpha)}(x)$ by $\{s_{i,N,\alpha}\}_{i=1}^N$. We write
\begin{equation}
\prod_{1\leq i<j\leq N}(s_{j,N,\alpha}-s_{i,N,\alpha})^2=(-1)^{N(N+1)/2}(N!)^N\prod_{i=1}^NL_{N}^{(\alpha)\prime}(s_{i,N,\alpha}).
\end{equation}
Using the derivative relation $xL_{N}^{(\alpha)\prime}(x)=NL_{N}^{(\alpha)}(x)-(N+\alpha)L_{N-1}^{(\alpha)}(x)$ combined with \eref{parttwo}, we have
\begin{equation}
\fl\prod_{1\leq i<j\leq N}(s_{j,N,\alpha}-s_{i,N,\alpha})^2=\frac{(-1)^{N(N-1)/2}(N!)^N(\alpha+N)^N}{\prod_{j=1}^N(\alpha+j)}\prod_{i=1}^NL_{N-1}^{(\alpha)}(s_{i,N,\alpha}).\label{tobeinserted}
\end{equation}
The product of Laguerre polynomials can be rewritten as
\begin{equation}
\prod_{i=1}^NL_{N-1}^{(\alpha)}(s_{i,N,\alpha})=\frac{N^N}{N!}\prod_{i=1}^{N-1}L_{N}^{(\alpha)}(s_{i,N-1,\alpha}),
\end{equation}
and we use the recursion relation $NL_{N}^{(\alpha)}(x)=(-x+2N+\alpha-1)L_{N-1}^{(\alpha)}(x)-(N+\alpha-1)L_{N-2}^{(\alpha)}(x)$ on the above to obtain
\begin{equation}
\prod_{i=1}^NL_{N-1}^{(\alpha)}(s_{i,N,\alpha})=\frac{(-1)^{N-1}(N-1+\alpha)^{N-1}}{(N-1)!}\prod_{i=1}^{N-1}L_{(N-1)-1}^{(\alpha)}(s_{i,N-1,\alpha}).
\end{equation}
Mathematical induction on this expression yields
\begin{equation}
\prod_{i=1}^NL_{N-1}^{(\alpha)}(s_{i,N,\alpha})=(-1)^{N(N-1)/2}\prod_{i=1}^{N-1}\left(\frac{\alpha+i}{N-i}\right)^i.
\end{equation}
Inserting this expression into \eref{tobeinserted} and taking logarithms on both sides gives
\begin{equation}
2\sum_{1\leq i<j\leq N}\log|s_{j,N,\alpha}-s_{i,N,\alpha}|=\sum_{i=1}^N[(i-1)\log(\alpha+i)+i\log i].\label{partthree}
\end{equation}
Substituting \eref{partone}, \eref{parttwo} and \eref{partthree} into \eref{minimumvalue} gives $\min_{\bz\in\RR^N} F(\bz,\alpha+1/2,N)=0$.

\section{Correspondence Between Dunkl Processes and Calogero-Moser Systems}\label{DPCMS}

Under an arbitrary root system $R$, the CM systems on a line with a harmonic background potential and an inverse-square repulsion potential are given by the Hamiltonian
\begin{equation}
\mathcal{H}_{\text{CM}}^R=-\frac{1}{2}\Delta^{(x)}+\sum_{\balpha\in R_+}\frac{\alpha^2}{2}\frac{k(\balpha)[k(\balpha)-\sigma_{\balpha}]}{(\balpha\cdot\bx)^2}+\frac{\omega^2}{2}\sum_{i=1}^Nx_i^2,\label{generalcm}
\end{equation}
where all the particles have been chosen to be of unit mass, and we have taken $\hbar =1$. This Hamiltonian corresponds to Equation (2.9) in \cite{khastgir00} with the inverse-square potential $V(r)=1/r^2$. (See \ref{generalreview} for the definition of the expressions used here.)

Dunkl operators have been used as a tool to prove the integrability of the CM systems (see \cite{forrester10}, Sections~11.4.2 to 11.5.3). It has been shown under several root systems \cite{rosler98} that after applying a similarity transformation (using the ground state eigenfunction), the CM system Hamiltonian is expressed as a Dunkl Laplacian plus a restoring term, $-\bx\cdot\bnabla^{(x)}$ (here, $\bnabla^{(x)}=(\frac{\partial}{\partial x_1},\ldots,\frac{\partial}{\partial x_N})$). One can then find the polynomial eigenfunctions for the transformed Hamiltonian as stated in \cite{bakerdunklforrester} and shown in \cite{bakerforrester97}. We aim to transform the KFE of a Dunkl process into the Schr{\"o}dinger equation of the CM systems. 

We define what we call the diffusion-scaling transformation as follows. In view of the transformation of a simple Brownian motion into a one-dimensional quantum harmonic oscillator in imaginary time proposed in \cite{katoritanemura07}, we consider the substitution given by
\begin{equation}
(t,\by)\to(\tau,\bzeta)=\Bigg(\frac{\ln t}{2\omega},\frac{\by}{\sqrt{2\omega t}}\Bigg).\label{substitutionr}
\end{equation}
We denote the density of the Dunkl process at a time $t$ for a given initial configuration by $f(t,\by)$. We perform the variable substitution $u[t(\tau,\bzeta,\omega),\by(\tau,\bzeta,\omega)]=u(\tau,\bzeta)$ and after that we apply the similarity transformation
\begin{equation}
u(\tau,\bzeta)=\exp[-W(\tau,\bzeta)]U(\tau,\bzeta)\label{transformationr}
\end{equation}
with $W(\tau,\bzeta)$ given by
\begin{equation}
W(\tau,\bzeta)=\frac{1}{2}\omega\sum_{i=1}^N\zeta_i^2-\frac{1}{2}\ln w_k(\bzeta)+\omega N\tau.\label{Wr}
\end{equation}
Because the scaling is isotropic, it is independent of the root system $R$.

\begin{proposition}\label{correspondencer}
The diffusion-scaling transformation given by \eref{substitutionr}, \eref{transformationr}, and \eref{Wr} transforms the Dunkl process on the root system $R$ into the CM system with harmonic confinement on the same root system evolving in imaginary time.
\end{proposition}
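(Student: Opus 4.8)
The plan is to begin from the Kolmogorov forward equation \eref{dunklforward}, which the density $f(t,\by)$ itself satisfies by linearity in the initial condition, and then to carry out the two operations defining the diffusion-scaling transformation in sequence: first the change of variables \eref{substitutionr}, then the similarity transformation \eref{transformationr}--\eref{Wr}. After the change of variables I expect the same Dunkl-type forward operator to reappear in the $\bzeta$ variables, accompanied by a new Ornstein--Uhlenbeck drift; after the similarity transformation I expect the generator to collapse onto $-\mathcal{H}_{\text{CM}}^R$ up to an additive constant, so that $\partial_\tau U=-\mathcal{H}_{\text{CM}}^R U$ holds modulo a trivial ground-state energy shift.

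First I would implement \eref{substitutionr} by the chain rule. Writing $t=\rme^{2\omega\tau}$ and $\by=\sqrt{2\omega t}\,\bzeta$, one finds $\partial/\partial\tau=2\omega t\,\partial/\partial t+\omega\sum_i y_i\,\partial/\partial y_i$ and $\partial/\partial y_i=(2\omega t)^{-1/2}\,\partial/\partial\zeta_i$, whence $\Delta^{(y)}=(2\omega t)^{-1}\Delta^{(\zeta)}$, $\balpha\cdot\by=\sqrt{2\omega t}\,(\balpha\cdot\bzeta)$, and $y_i\,\partial/\partial y_i=\zeta_i\,\partial/\partial\zeta_i$. Multiplying \eref{dunklforward} by $2\omega t$ and substituting, every explicit factor of $t$ cancels because the forward operator is covariant under this parabolic scaling; the drift and reflection terms reproduce themselves verbatim in $\bzeta$, and the only genuinely new term is the harmonic drift generated by the $\omega\sum_i y_i\partial_{y_i}$ piece of $\partial_\tau$. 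This yields $\partial_\tau u=\mathcal{L}u$ with
\[
\mathcal{L}=\frac{1}{2}\Delta^{(\zeta)}-\sum_{\balpha\in R_+}k(\balpha)\frac{\balpha\cdot\bnabla^{(\zeta)}}{\balpha\cdot\bzeta}+\sum_{\balpha\in R_+}k(\balpha)\frac{\alpha^2}{2}\frac{1+\sigma_{\balpha}}{(\balpha\cdot\bzeta)^2}+\omega\sum_{i}\zeta_i\frac{\partial}{\partial\zeta_i}.
\]

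Next I would apply \eref{transformationr}, splitting $W=W_0(\bzeta)+\omega N\tau$ with $\rme^{-W_0}=\rme^{-\omega\zeta^2/2}w_k(\bzeta)^{1/2}$. The $\tau$-dependent piece contributes a constant $+\omega N$ to the equation for $U$, while the conjugation by $\rme^{-W_0}$ is where the structure emerges. The crucial observation is that the entire first-order part of $\mathcal{L}$ is a gradient drift, $\bnabla W_0=\omega\bzeta-\frac12\bnabla\ln w_k=\omega\bzeta-\sum_{\balpha\in R_+}k(\balpha)\frac{\balpha}{\balpha\cdot\bzeta}$, so that upon conjugation the first-order terms cancel and $\rme^{W_0}(\frac12\Delta^{(\zeta)}+\bnabla W_0\cdot\bnabla^{(\zeta)})\rme^{-W_0}=\frac12\Delta^{(\zeta)}-\frac12|\bnabla W_0|^2-\frac12\Delta^{(\zeta)}W_0$. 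Since $w_k$ is $W$-invariant, $\rme^{-W_0}$ commutes through each $\sigma_{\balpha}$, so the reflection part conjugates to itself. Using $\bzeta\cdot\balpha/(\balpha\cdot\bzeta)=1$ and $\bnabla^{(\zeta)}\cdot(\balpha/(\balpha\cdot\bzeta))=-\alpha^2/(\balpha\cdot\bzeta)^2$, the harmonic contribution is $-\frac{\omega^2}{2}\zeta^2$; the diagonal inverse-square contributions from $|\bnabla W_0|^2$ and $\Delta^{(\zeta)}W_0$ combine with the conjugated reflection terms so that the $k$-linear pieces cancel and the survivors assemble into exactly $-\sum_{\balpha\in R_+}\frac{\alpha^2}{2}\frac{k(\balpha)[k(\balpha)-\sigma_{\balpha}]}{(\balpha\cdot\bzeta)^2}$, precisely the inverse-square potential of \eref{generalcm}. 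The residual constants, together with $+\omega N$ from the $\tau$-term, leave only a shift by the ground-state energy $\omega(N/2+\gamma_R)$ with $\gamma_R$ as in \eref{gammasumk}, so that $\partial_\tau U=-\mathcal{H}_{\text{CM}}^R U$ up to this constant, which is the imaginary-time Schr\"odinger equation for \eref{generalcm}.

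The hard part will be the vanishing of the off-diagonal cross terms in $|\bnabla W_0|^2$, namely $\sum_{\balpha\neq\balpha'\in R_+}k(\balpha)k(\balpha')\frac{\balpha\cdot\balpha'}{(\balpha\cdot\bzeta)(\balpha'\cdot\bzeta)}$, which have no counterpart in \eref{generalcm} and so must cancel identically. This is exactly the Calogero-type three-body identity for root systems, equivalent to the statement that $w_k^{1/2}\rme^{-\omega\zeta^2/2}$ is an exact eigenfunction (the ground state) of $\mathcal{H}_{\text{CM}}^R$; I would invoke it from the ground-state transformation established in \cite{rosler98} rather than re-derive it case by case. A secondary point requiring careful bookkeeping is the collection of the additive constants and the precise role of the $\omega N\tau$ term, which I expect to absorb the dimensional part of the zero-point energy, leaving the physically irrelevant overall shift by the ground-state energy.
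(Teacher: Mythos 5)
Your proposal is correct and follows essentially the same route as the paper: change of variables via the chain rule, then conjugation by $\rme^{-W}$, with the final step reduced to the vanishing of the off-diagonal cross terms $\sum_{\balpha\neq\bxi}k(\balpha)k(\bxi)\,\balpha\cdot\bxi/[(\balpha\cdot\bzeta)(\bxi\cdot\bzeta)]$, which the paper disposes of by Lemma~4.4.6 of \cite{dunklxu} and you by the equivalent ground-state property from \cite{rosler98}. Your packaging of the conjugation through the gradient-drift identity $\rme^{W_0}(\frac{1}{2}\Delta+\bnabla W_0\cdot\bnabla)\rme^{-W_0}=\frac{1}{2}\Delta-\frac{1}{2}|\bnabla W_0|^2-\frac{1}{2}\Delta W_0$ is only an organizational variant of the paper's term-by-term computation in \eref{differentialtransformationsr}, and your bookkeeping of the constants correctly reproduces $E_0^R=\omega(\gamma_R+N/2)$.
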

\begin{proof}
We transform the KFE \eref{dunklforward}. We begin by considering a Dunkl process whose distribution is given by $f(t,\by)$ for some initial condition. We write down the derivatives in time and space in terms of the new variables as follows:
\begin{eqnarray}
\frac{\partial}{\partial t}&=&\frac{1}{2\omega t}\frac{\partial}{\partial \tau}-\frac{1}{2t}\bzeta\cdot\bnabla^{(\zeta)},\nonumber\\
\frac{\partial}{\partial y_i}&=&\frac{1}{\sqrt{2\omega t}}\frac{\partial}{\partial \zeta_i}.\label{dertransformation1}
\end{eqnarray}
The differential operators that result from inserting the above in \eref{dunklforward} are transformed by \eref{transformationr} as follows:
\begin{eqnarray}
\e^{W}\frac{\partial}{\partial \tau}\e^{-W}&=&\frac{\partial}{\partial \tau}-\omega N,\nonumber\\
\e^{W}\frac{\partial}{\partial \zeta_i}\e^{-W}&=&\frac{\partial}{\partial \zeta_i}-\omega\zeta_i+\sum_{\balpha\in R_+}\frac{k(\balpha)}{\balpha\cdot\bzeta}\alpha_i,\nonumber\\
\e^{W}\Delta^{(\zeta)}\e^{-W}&=&\Delta^{(\zeta)}+2\Bigg(\sum_{\balpha\in R_+}\frac{k(\balpha)}{\balpha\cdot\bzeta}\balpha-\omega\bzeta\Bigg)\cdot\bnabla^{(\zeta)}+\omega^2\zeta^2-(2\gamma_R+N)\omega\nonumber\\
&&+\sum_{\balpha\in R_+}\sum_{\bxi\in R_+}\frac{k(\balpha)k(\bxi)}{(\balpha\cdot\bzeta)(\bxi\cdot\bzeta)}\balpha\cdot\bxi-\sum_{\balpha\in R_+}\frac{k(\balpha)}{(\balpha\cdot\bzeta)^2}\alpha^2.\label{differentialtransformationsr}
\end{eqnarray}
Therefore, inserting \eref{dertransformation1} and \eref{differentialtransformationsr} successively in \eref{dunklforward} yields
\begin{eqnarray}
\frac{\partial}{\partial \tau}U(\tau,\bzeta)&=&\frac{1}{2}\Delta^{(\zeta)}U(\tau,\bzeta)+\frac{\omega}{2}[2\gamma_R+N-\omega\zeta^2]U(\tau,\bzeta)\nonumber\\
&&+\sum_{\balpha\in R_+}\frac{\alpha^2}{2}\frac{k(\balpha)}{(\balpha\cdot\bzeta)^2}U(\tau,\sigma_{\balpha}\bzeta)\nonumber\\
&&\ -\sum_{\balpha\in R_+}\sum_{\bxi\in R_+}\frac{\balpha\cdot\bxi}{2}\frac{k(\balpha)k(\bxi)}{(\balpha\cdot\bzeta)(\bxi\cdot\bzeta)}U(\tau,\bzeta).\label{withdoublesum}
\end{eqnarray}
The double sum in the bottom term of the equation above can be simplified because all the terms where $\balpha\neq\bxi$ cancel each other (see Lemma~4.4.6 of \cite{dunklxu}). By denoting the ground-state energy by $E_{\text{0}}^{R}=\omega(\gamma_R+N/2)$ and using $\HH{CM}^{R}$ with $\bzeta$ instead of $\bx$, we finally obtain
\begin{equation}
-\frac{\partial}{\partial \tau}U(\tau,\bzeta)=[\HH{CM}^{R}-E_{\text{0}}^{R}]U(\tau,\bzeta).\qedhere\label{done}
\end{equation}
\end{proof}

%We insert these derivatives in \eref{dunklforward} to obtain
%\begin{eqnarray}
%\frac{\partial}{\partial \tau}u(\tau,\bzeta)&=&\frac{1}{2}\Delta^{(\zeta)}u(\tau,\bzeta)-\sum_{\balpha\in R_+}\frac{k(\balpha)}{\balpha\cdot\bzeta}\balpha\cdot\bnabla^{(\zeta)}u(\tau,\bzeta)\nonumber\\
%&&+\sum_{\balpha\in R_+}k(\balpha)\frac{\alpha^2}{2}\frac{u(\tau,\bzeta)+u(\tau,\sigma_{\balpha}\bzeta)}{(\balpha\cdot\bzeta)^2}+\omega\bzeta\cdot\bnabla^{(\zeta)}u(\tau,\bzeta).\label{rdunklsubstituted}
%\end{eqnarray}
%The differential operators above 
%We insert the above into \eref{rdunklsubstituted} and obtain

{\it Remark:} this proof involves only straightforward calculations, with the notable exception of the step required to simplify the double sum in \eref{withdoublesum}. This is perhaps the most important part of the proof, and it is not trivial. The simplest case is when $R$ is the root system of type $A$ (see, e.g., \cite{forrester10}, Proposition~11.3.1). Note also that Proposition~\ref{correspondencer} only requires that $\omega>0$. If $\omega=0$, there is no need to use the diffusion scaling \eref{substitutionr}, and one may simply apply a similarity transformation on the Dunkl process to obtain the unconfined CM system on the same root system. 
\\
\bibliography{biblio}

\end{document}